\newtheorem{proposition}{Proposition}
\newtheorem{proposition?}{Proposition?}
\newtheorem{theorem}{Theorem}
\theoremstyle{definition}
\newcommand{\real}{\mathbb R} 
\newcommand{\complex}{\mathbb C} 
\newcommand{\half}{\tfrac{1}{2}} 
\newcommand{\ii}{i} 
\newcommand{\hi}{\mathcal{H}} 
\newcommand{\hik}{\mathcal{K}} 
\newcommand{\lh}{\mathcal{L(H)}} 
\newcommand{\lin}[1]{\mathcal{L}(#1)} 
\newcommand{\lk}{\mathcal{L(K)}} 
\newcommand{\ip}[2]{\left\langle\,#1\,|\,#2\,\right\rangle} 
\newcommand{\kb}[2]{|#1\rangle\langle#2|} 
\newcommand{\no}[1]{\left\|#1\right\|} 
\newcommand{\tr}[1]{\mathrm{tr}\left[#1\right]} 
\newcommand{\ptr}[2]{\mathrm{tr}_{#1}[#2]} 
\newcommand{\id}{\mathbbm{1}} 
\newcommand{\obs}{\mathfrak{O}}
\newcommand{\A}{\mathsf{A}}
\newcommand{\F}{\mathsf{F}}
\newcommand{\T}{\mathsf{T}}
\newcommand{\X}{\mathsf{X}}
\newcommand{\Y}{\mathsf{Y}}
\newcommand{\Z}{\mathsf{Z}}
\newcommand{\chan}{\mathfrak{C}}
\newcommand{\luders}[1]{\mathcal{L}_{#1}}
\newcommand{\dep}{\Gamma} 
\begin{document}

\title[]{Incompatibility of unbiased qubit observables and Pauli channels}

\author{Teiko Heinosaari}
\affiliation{Turku Centre for Quantum Physics, Department of Physics and Astronomy, University of Turku, FI-20014 Turku, Finland}

\author{Daniel Reitzner}
\affiliation{RCQI, Institute of Physics, Slovak Academy of Sciences, D\'ubravsk\'a cesta 9, 845 11 Bratislava, Slovakia}

\author{Tom\'a\v s Ryb\'ar}
\affiliation{RCQI, Institute of Physics, Slovak Academy of Sciences, D\'ubravsk\'a cesta 9, 845 11 Bratislava, Slovakia}

\author{M\'ario Ziman}
\affiliation{RCQI, Institute of Physics, Slovak Academy of Sciences, D\'ubravsk\'a cesta 9, 845 11 Bratislava, Slovakia}

\begin{abstract}
A quantum observable and a channel are considered compatible if they form parts of the same measurement device, otherwise they are incompatible. 
Constrains on compatibility between observables and channels can be quantified via relations highlighting the necessary trade-offs between noise and disturbance within quantum measurements.  
In this paper we shall discuss the general properties of these compatibility relations, and then fully characterize the compatibility conditions for an unbiased qubit observable and a Pauli channel.
The implications of the characterization are demonstrated on some concrete examples.
\end{abstract}


\maketitle

\section{Introduction}

The paradigm of quantum incompatibility stands behind many quantum phenomena and quantum information no-go theorems \cite{HeMiZi16}. 
One of the most paradigmatic manifestations of incompatibility is the \emph{no-information-without-disturbance theorem}. 
It states that a unitary channel, i.e., a channel that does not cause an irreversible disturbance, is compatible only with trivial observables \cite{KrScWe08ieee,Busch09,ChDaPe10}. 
A trivial observable corresponds to a coin tossing measurement and hence does not give any information on the input state. One therefore concludes that if a measurement gives some information, it must cause disturbance. 

The trade-off between information and disturbance is relevant at least in two different scenarios. 
Firstly, suppose an unwanted disturbance is identified in a communication channel and it may have been caused by the actions of an eavesdropper.
In this case, it is relevant to know what sort of information the eavesdropper may have obtained.
This means that we want to know all possible measurements that may have caused the disturbance.
Secondly, we may plan to perform a measurement of an observable $\A$.
The measurement causes a necessary state perturbation, but the form of this perturbation can be partly controlled by choosing the way in which we measure $\A$.
In this case, it is relevant to know all channels that are compatible with $\A$.

There are several studies in the literature where the information-disturbance relation is investigated by first quantifying information and disturbance and then deriving an inequality for those measures.
In this work we follow a structural approach \cite{HeMi13,HeMiRe14} that does not commit to any specific quantifications of information and disturbance.  
The main idea is to determine if a channel and an observable can be parts of the same measurement process or not.
After presenting the general characterization of compatible pairs of observables and channels (Sec. \ref{sec:incomp}), we concentrate on the cases in which the implemented channel is a Pauli channel (Sec. \ref{sec:pauli}).
In particular, we derive a complete criterion when a noisy version of a binary qubit observable is compatible with a given Pauli channel.
Finally, we demonstrate the consequences of these results on concrete examples (Sec. \ref{sec:examples}).

\section{Incompatibility of channels and observables}\label{sec:incomp}

\subsection{Two equivalent definitions of incompatibility}

When considering (in)compatibility of channels and observables, we can start either from the concept of a \emph{measurement model} or from an \emph{instrument} \cite{QTM96,SSQT01,MLQT12}.
The first one explains the physical meaning of compatibility, while the latter is more convenient from the mathematical point of view. 
In the following we recall these two equivalent ways to define (in)compatibility. 

\begin{figure}
\begin{center}
\includegraphics[width=7cm]{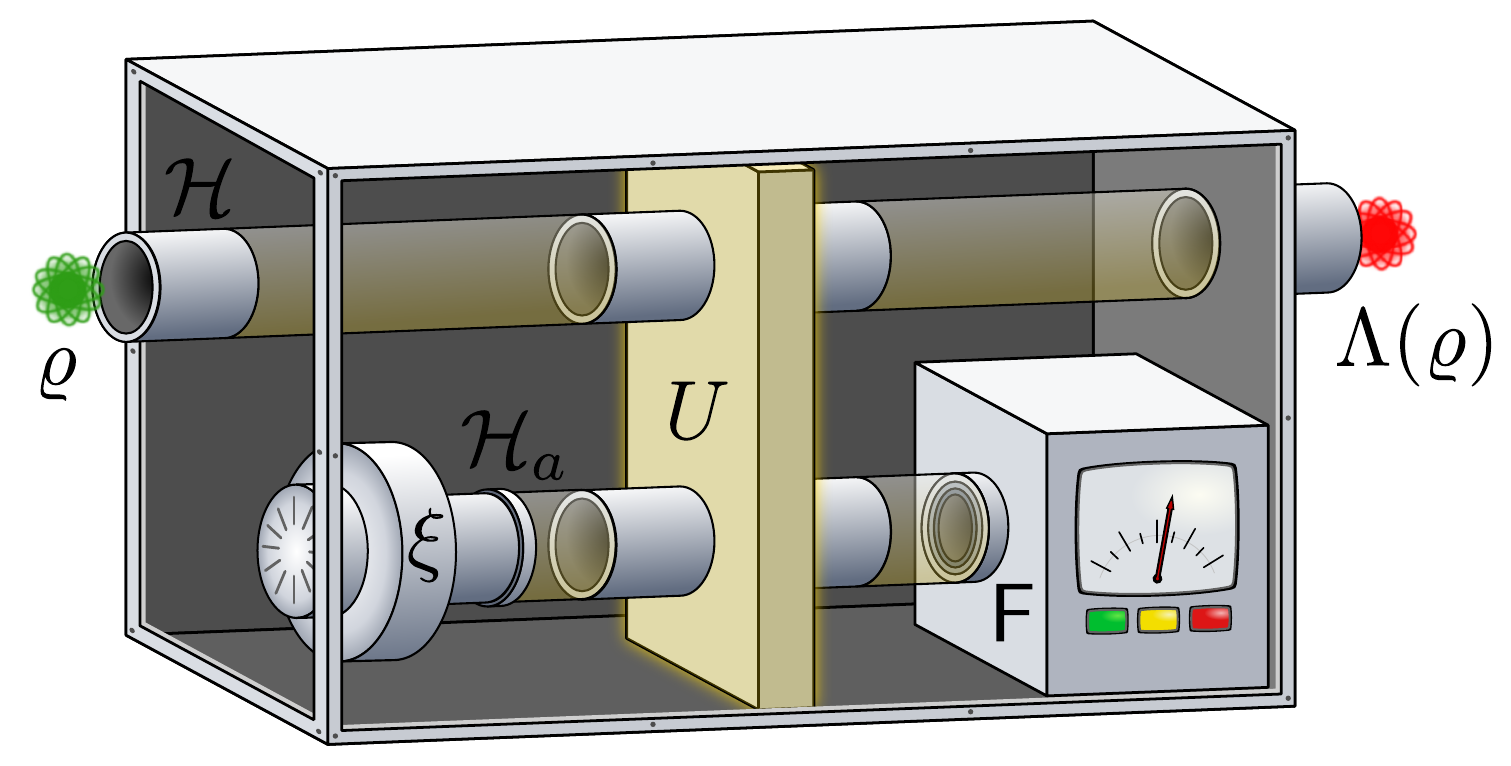}
\end{center}
\caption{\label{fig:memo}A measurement model describes a measurement of a system via coupling it to an ancilliary system. It gives description of both the state change $\Lambda(\varrho)$ and the effective measurement $\A$ on a given input state $\varrho$. This model exemplifies the compatibility of the channel $\Lambda$ and observable $\A$.}
\end{figure}

A measurement model (see Fig.~\ref{fig:memo}) is a formalized description of a measurement process. 
A measured system, associated with a Hilbert space $\hi$, is coupled to an
ancilliary system associated with $\hi_a$.
The composite system undergoes a unitary evolution $\varrho\otimes\xi \mapsto U(\varrho\otimes\xi)U^*$, which is called measurement coupling. 
After that the ancilliary system is measured with a pointer observable
$\F$. 
A measurement model is hence specified by the ancillary Hilbert space $\hi_a$, initial state of the ancilla $\xi$, unitary operator  $U$ on $\hi\otimes\hi_a$ and pointer observable $\F$.
From these it is straightforward to
determine both the average state disturbance $\Lambda$ and
the effective observable $\A$ of the measured system initialized in
the state $\varrho$. 
In particular, for the probability of getting an outcome $x$ we need to have
\begin{equation}\label{eq:memo1}
\tr{\varrho \A(x)} = \tr{ U(\varrho \otimes \xi)U^* (\id \otimes \F(x))} \, .
\end{equation}
This equation is required to be valid for all input states $\varrho$, so it in fact determines the observable $\A$.
Mathematically, $\A$ is a positive operator valued measure (POVM).
The average state disturbance on the system is given as
\begin{equation}\label{eq:memo2}
\Lambda(\varrho) = \ptr{\hi_{a}}{ U(\varrho \otimes \xi)U^*} \, .
\end{equation}
Mathematically, $\Lambda$ is a channel, i.e., a trace preserving completely positive map. 

Physically speaking, the observable $\A$ is related to the information that we can extract from the system's initial state $\varrho$, while the channel $\Lambda$ describes the average perturbation of the state caused by the measurement process, i.e.,~$\Lambda(\varrho)$ is the output state of the system when no post-selection took place.

In what follows we are interested in the inverse problem: given an observable
$\A$ and a channel $\Lambda$, is there a measurement model (specified by a quadruple $(\hi_a,\xi,U,\F)$)
such that $\A$ is given by \eqref{eq:memo1} and $\Lambda$ is given by
\eqref{eq:memo2}? If that is the case, we say that $\A$ and $\Lambda$ are \emph{compatible}; otherwise they are \emph{incompatible}.

The compatibility relation can be concisely expressed in terms of an instrument.
An instrument is a map $x \mapsto \Phi_x$ that assigns a trace-non-increasing completely positive map to each measurement outcome, and such that the sum $\sum_x \Phi_x$ is a trace-preserving map.
The instrument related to a measurement model specified by a quadruple $(\hi_a,\xi,U,\F)$ is
\begin{equation}
\Phi_x(\varrho) =  \ptr{\hi_{a}}{ U(\varrho \otimes \xi)U^*(\id \otimes \F(x))} \, .
\end{equation}
Conversely, for every instrument it is possible to find some measurement model in this way, and one can even choose $\xi$ to be a pure state and $\F$ to be a projection valued measure \cite{Ozawa84}.
The compatibility of a channel $\Lambda$ and an observable $\A$
is hence equivalent to the existence of an instrument $x \mapsto \Phi_x$ such that 
\begin{equation}\label{eq:Kraus}
\sum_x \Phi_x(\varrho) = \Lambda(\varrho) \quad \textrm{and} \quad \tr{\Phi_x(\varrho)}=\tr{\varrho \A(x)}
\end{equation}
for all outcomes $x$ and input states $\varrho$ .

It is reasonable to expect that simultaneous unitary transformations of both the observable and the channel should not change the compatibility relation of the two. This is formalized in the proposition we present after first fixing some notation.
For a unitary operator $V$, we denote by $\widetilde{V}$ the corresponding unitary channel, i.e., 
\begin{equation}
\widetilde{V}(\varrho) = V\varrho V^* \, .
\end{equation}
The functional composition of two channels is denoted by $\circ$.
Hence, for two unitary operators $V,W$ and a channel $\Lambda$, the composition $\widetilde{W}\circ\Lambda\circ \widetilde{V}$ denotes the channel
\begin{equation}
\varrho \mapsto W\Lambda(V\varrho V^*) W^* \, .
\end{equation}
Further, if $\A$ is an observable, then we denote by $V^{\ast} \A V$ the observable consisting of operators $V^{\ast} \A(x) V$.
 
\begin{proposition}\label{prop:unitary}
For any unitary operators $V$ and $W$, the following holds:  
a channel $\Lambda$ is compatible with an observable $\A$ if and only if the
channel $\widetilde{W}\circ\Lambda\circ \widetilde{V}$ is compatible with the observable $V^{\ast} \A V$.
\end{proposition}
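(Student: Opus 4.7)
The plan is to work at the level of instruments, using the characterization in equation \eqref{eq:Kraus} as the operational definition of compatibility, and then simply conjugate a witnessing instrument by the relevant unitary channels. This keeps the argument free of any explicit reference to the ancillary system and the pointer observable, since the measurement-model formulation has already been reduced to the instrument formulation.

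For the forward direction, suppose $\Lambda$ is compatible with $\A$ and let $x\mapsto \Phi_x$ be an instrument witnessing this, so $\sum_x\Phi_x=\Lambda$ and $\tr{\Phi_x(\varrho)}=\tr{\varrho\A(x)}$ for every input state $\varrho$. I would then define a candidate instrument
\begin{equation}
\Phi'_x \;=\; \widetilde{W}\circ\Phi_x\circ\widetilde{V},
\end{equation}
and verify the two conditions in \eqref{eq:Kraus} for the pair $(\widetilde{W}\circ\Lambda\circ\widetilde{V},\, V^{\ast}\A V)$. Linearity and additivity over $x$ give $\sum_x\Phi'_x=\widetilde{W}\circ\Lambda\circ\widetilde{V}$ immediately, and since the trace is invariant under conjugation by a unitary, $\tr{\Phi'_x(\varrho)}=\tr{\Phi_x(V\varrho V^{\ast})}=\tr{V\varrho V^{\ast}\A(x)}=\tr{\varrho\, V^{\ast}\A(x)V}$, which is exactly the required relation. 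One also has to remark that $\Phi'_x$ is an instrument in its own right: each $\Phi'_x$ is completely positive as a composition of completely positive maps, and the normalization $\sum_x\Phi'_x=\widetilde{W}\circ\Lambda\circ\widetilde{V}$ is trace preserving, so the $\Phi'_x$ are trace nonincreasing.

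The converse direction is structurally identical: if $x\mapsto \Psi_x$ witnesses the compatibility of $\widetilde{W}\circ\Lambda\circ\widetilde{V}$ with $V^{\ast}\A V$, one sets $\Phi_x=\widetilde{W^{\ast}}\circ\Psi_x\circ\widetilde{V^{\ast}}$ and runs exactly the same check, using $VV^{\ast}=WW^{\ast}=\id$ and $V^{\ast}(V^{\ast}\A V)V^{\ast\ast}$ collapsing back to $\A$ only through the trace argument. Alternatively one may simply invoke the forward direction with $V,W$ replaced by $V^{\ast},W^{\ast}$ and with the observable $V^{\ast}\A V$ in place of $\A$, which yields the result with no further work.

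There is no real obstacle here; the content of the statement is that conjugation by unitary channels acts covariantly on the set of witnessing instruments. The only item worth being explicit about is that the definition of compatibility invoked is the instrument-level one of \eqref{eq:Kraus} rather than the measurement-model one of \eqref{eq:memo1}-\eqref{eq:memo2}; the equivalence of the two has already been established in the paragraph preceding the proposition, so I may use it freely.
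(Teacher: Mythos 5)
Your proposal is correct and follows essentially the same route as the paper: both define the conjugated instrument $\Phi'_x(\varrho)=W\Phi_x(V\varrho V^{\ast})W^{\ast}$ from a witnessing instrument for $(\Lambda,\A)$ and obtain the converse by applying the same construction with $V^{\ast},W^{\ast}$. Your write-up merely spells out the trace computation and the complete-positivity/normalization checks that the paper leaves implicit.
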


\begin{proof}
Suppose that $x\mapsto\Phi_x$ is an instrument such that $\Lambda$ and $\A$ satisfy \eqref{eq:Kraus}.
In that case, the instrument $x\mapsto\Phi'_x$ defined as $\Phi'_x(\varrho) = W\Phi_x(V\varrho V^*) W^*$ demonstrates the compatibility of $\widetilde{W}\circ\Lambda\circ \widetilde{V}$ and $V^{\ast} \A V$.
We can run the same argument for the inverse operators $V^*$ and $W^*$, hence the converse holds also. 
\end{proof}

\subsection{Channels compatible with given observable }\label{sec:least}

Every observable $\A$ has a collection of compatible channels, denoted by $\chan_\A$. The set $\chan_\A$ specifies what kinds of perturbations are possible when $\A$ is measured. There are many ways (by means of measurement models, or instruments) to measure $\A$, and for this reason $\chan_\A$ contains many channels. We will limit $\chan_\A$ to channels that have the same input and output spaces, although generally one could allow arbitrary output spaces \cite{HeMi13}.
For each observable $\A$, the set $\chan_\A$
\begin{itemize}
\item{is \emph{convex}, i.e.,~if $\Lambda_1$ and $\Lambda_2$ are compatible with $\A$, then also all their mixtures $t \Lambda_1 + (1-t) \Lambda_2$, $0<t<1$, are compatible with $\A$,}
\item{is a \emph{left ideal} of the set of all channels (see Fig.~\ref{fig:CA}), i.e.,~if $\Lambda$ is compatible with $\A$ and $\Lambda'$ is any other channel, then their concatenation $\Lambda' \circ \Lambda$ is compatible with $\A$ as well,}
\item{contains all \emph{completely depolarizing channels} (see Fig.~\ref{fig:CA}) $\varrho \mapsto \varrho_0$, where $\varrho_0$ is an arbitrary fixed state}.
 \item{contains the \emph{L\"uders channel} $\luders{\A}$ of $\A$, which is defined as $\luders{\A}(\varrho)=\sum_x \sqrt{\A(x)} \varrho \sqrt{\A(x)}$.}
\end{itemize}

\begin{figure}
\begin{center}
\includegraphics[scale=0.4]{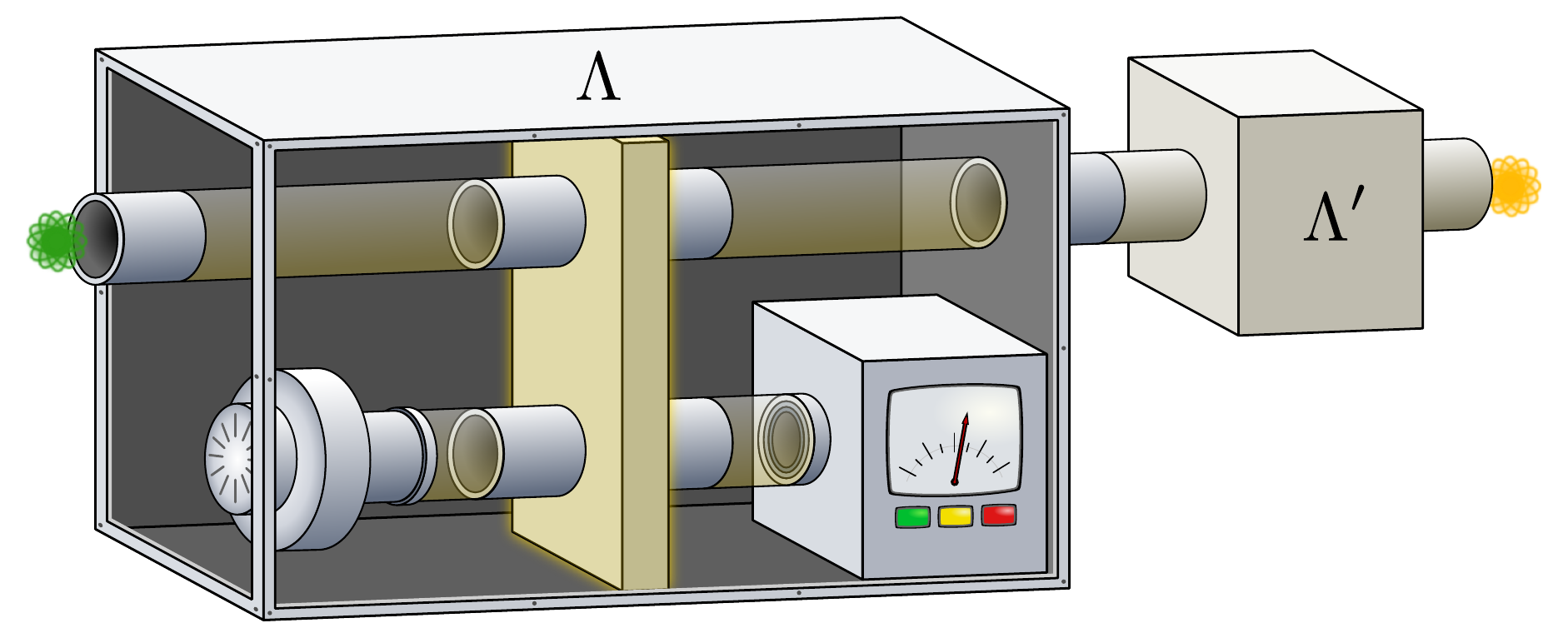}\\\bigskip
\includegraphics[scale=0.4]{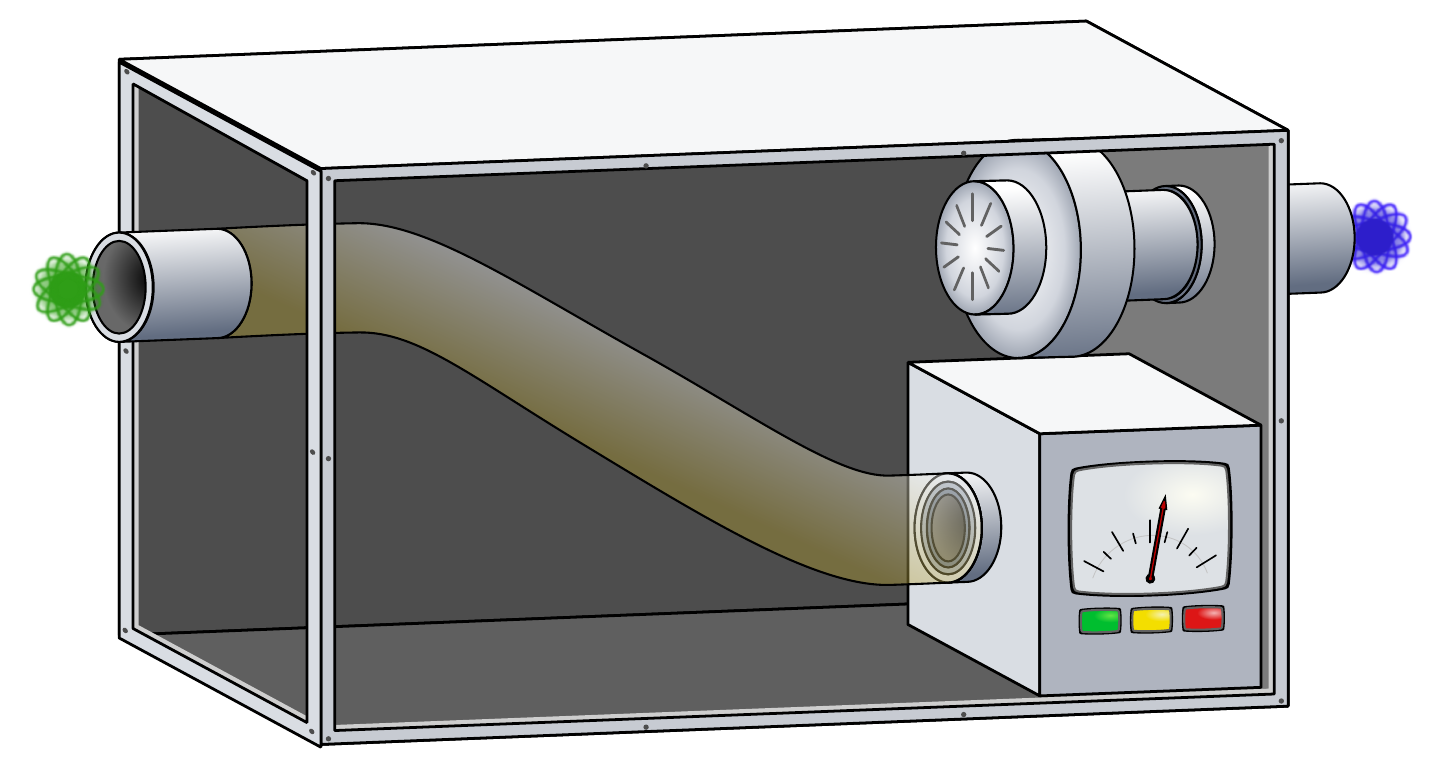}
\end{center}
\caption{\label{fig:CA} The set $\chan_\A$ is a left ideal of the set of all channels --- using some channel $\Lambda'$ after a channel $\Lambda$ compatible with $\A$ produces a new channel compatible with $\A$ (upper figure). The set  $\chan_\A$ always contains completely depolarizing channels $\Gamma$ (lower figure), for which $\Gamma(\varrho)=\varrho_0$ for some fixed state $\varrho_0$. }
\end{figure}

The following result, proved in \cite{HeMi13}, charaterizes the set $\chan_\A$ completely.

\begin{theorem}\label{prop:mother}
  There is a channel $\Lambda_\A:\lin{\hi}\to\lin{\hik}$ and a Hilbert space
  $\hik$  such that every channel $\Lambda$ compatible with $\A$ is of the form
$\Lambda = \Lambda' \circ \Lambda_{\A}$, where $\Lambda':\lin{\hik}\to\lin{\hi}$ is some channel.
\end{theorem}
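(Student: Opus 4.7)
The plan is to exhibit one specific ``canonical'' compatible channel $\Lambda_\A$ that preserves the full outcome-conditional information one could extract by measuring $\A$, and then to show that every channel compatible with $\A$ is obtained by a suitable post-processing of $\Lambda_\A$.

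First I would introduce an auxiliary Hilbert space $\hi_0$ with an orthonormal basis $\{|x\rangle\}$ labelled by the outcomes of $\A$, set $\hik=\hi\otimes\hi_0$, and define
$$\Lambda_\A(\varrho)=\sum_x \sqrt{\A(x)}\,\varrho\,\sqrt{\A(x)}\otimes|x\rangle\langle x|.$$
Complete positivity is immediate from this Kraus form, and trace preservation follows from $\sum_x \A(x)=\id$. Intuitively, $\Lambda_\A$ implements the L\"uders instrument of $\A$ while recording the classical outcome into a coherently accessible register; in particular it is itself compatible with $\A$.

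Next, given any $\Lambda\in\chan_\A$, I would pick an instrument $x\mapsto\Phi_x$ realising \eqref{eq:Kraus} and fix Kraus decompositions $\Phi_x(\varrho)=\sum_i K_{xi}\varrho K_{xi}^*$. The normalisation $\sum_i K_{xi}^*K_{xi}=\A(x)=\sqrt{\A(x)}\sqrt{\A(x)}$ forces every Kraus operator to factor as $K_{xi}=V_{xi}\sqrt{\A(x)}$ for some $V_{xi}\in\lin{\hi}$ satisfying $\sum_i V_{xi}^*V_{xi}=\id$ on the range of $\sqrt{\A(x)}$. By adjoining one further $V_{x,0}$ supported on the orthogonal complement of that range (which produces a vanishing Kraus operator $K_{x,0}=V_{x,0}\sqrt{\A(x)}=0$ and so does not change $\Phi_x$), one may assume $\sum_i V_{xi}^*V_{xi}=\id$ on all of $\hi$ for every $x$. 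This Kraus-factorisation and extension step is the main technical obstacle, as one must be careful about kernels to secure a genuine Kraus family on the full space.

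Once the $V_{xi}$ are in hand, I would define $\Lambda':\lin{\hik}\to\lin{\hi}$ by
$$\Lambda'(Y)=\sum_{x,i} V_{xi}\,(\id\otimes\langle x|)\,Y\,(\id\otimes|x\rangle)\,V_{xi}^*,$$
which is completely positive by construction. Trace preservation uses the block decomposition $\tr{Y}=\sum_x \tr{(\id\otimes\langle x|)Y(\id\otimes|x\rangle)}$ on $\hi\otimes\hi_0$ together with $\sum_i V_{xi}^*V_{xi}=\id$. A direct computation then yields
$$\Lambda'\circ\Lambda_\A(\varrho)=\sum_{x,i} V_{xi}\sqrt{\A(x)}\,\varrho\,\sqrt{\A(x)}V_{xi}^*=\sum_x\Phi_x(\varrho)=\Lambda(\varrho),$$
so every $\Lambda\in\chan_\A$ factors through the universal channel $\Lambda_\A$, proving the theorem.
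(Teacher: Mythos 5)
Your proof is correct. Note that the paper does not actually prove Theorem~\ref{prop:mother} in the text --- it cites \cite{HeMi13} and records the mother channel in the minimal Naimark form $\Lambda_\A(\varrho)=\sum_x\hat\A(x)T\varrho T^*\hat\A(x)$, with $\hik=\complex^{r_1}\oplus\cdots\oplus\complex^{r_n}$ of dimension $\sum_x r_x$. Your construction differs in that you take the L\"uders instrument with the outcome written into a classical register, so your $\hik=\hi\otimes\hi_0$ has dimension $d\cdot n$ and is non-minimal; but on the support of each $\A(x)$ the block $\sqrt{\A(x)}\varrho\sqrt{\A(x)}\otimes\kb{x}{x}$ is isometrically equivalent to the paper's block $\sum_{k,\ell}\ip{\phi_{x,k}}{\varrho\,\phi_{x,\ell}}\kb{e_{x,k}}{e_{x,\ell}}$, so the two mother channels factor through each other and are interchangeable for the purposes of the theorem. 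The substance of your argument --- factoring each Kraus operator of $\Phi_x$ as $K_{xi}=V_{xi}\sqrt{\A(x)}$ via Douglas' lemma from $K_{xi}^*K_{xi}\leq\A(x)$, padding with a $V_{x,0}$ supported on $\ker\A(x)$ so that $\sum_i V_{xi}^*V_{xi}=\id$ while the added Kraus operator vanishes, and then assembling $\Lambda'$ from the $V_{xi}$ and the register projections --- is the standard route to this result and you handle the one genuinely delicate point (the kernel of $\A(x)$) correctly. Your approach buys a fully self-contained proof at the cost of a larger dilation space; the paper's version buys minimality, which matters elsewhere (e.g.\ when counting Kraus operators) but not for the statement at hand. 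One small addition worth making explicit: for $\Lambda_\A$ to deserve the name ``mother channel'' one also wants the converse inclusion, namely that every $\Lambda'\circ\Lambda_\A$ lies in $\chan_\A$; this is immediate from your observation that $\Lambda_\A$ itself is compatible with $\A$ together with the left-ideal property of $\chan_\A$, but the theorem as stated only requires the direction you prove.
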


The 'mother channel' mentioned in Theorem \ref{prop:mother} is given as
\begin{eqnarray}\label{eq:LambdaA} 
\Lambda_{\A} (\varrho) =\sum_x  \hat{\A}(x) T \varrho T^\ast \hat{\A}(x)  \, ,
 \end{eqnarray}
 where $(\hik, \hat{\A},T)$ is some minimal Naimark dilation of $\A$, i.e., $T:\hi\to\hik$ is an isometry, $\hat{\A}$ is a projection-valued measure (PVM) on $\hik$ and $\A(x) = T^* \hat{\A}(x) T 
$ for all $x\in\Omega_\A$.
In a finite dimensional case we can write a concrete form of the minimal Naimark dilation as follows.
We fix a spectral decomposition for each operator $\A(x)$,
\begin{equation}
\A(x)=\sum_{k=1}^{r_x}\kb{\phi_{x,k}}{\phi_{x,k}} \, ,
\end{equation}
where $r_x$ is the rank of $\A(x)$.
We then choose $\hik=\complex^{r_1} \oplus \cdots \oplus \complex^{r_n}$ and we fix orthonormal bases $\{e_{x,k}\}_{k=1}^{r_x}$ for each $\complex^{r_x}$.
We define a linear map $T:\hi\to\hik$ as $T\psi = \sum_{x,k} \ip{\phi_{x,k}}{\psi}e_{x,k}$.
Its adjoint $T^*:\hik\to\hi$ is given as $T^*e_{x,k} = \phi_{x,k}$.
The sharp observable $\hat{\A}$ that dilates $\A$ is 
\begin{equation}
\hat{\A}(x) = \sum_{k=1}^{r_x}\kb{e_{x,k}}{e_{x,k}} \, ,
\end{equation}
hence we obtain
\begin{equation}
\Lambda_\A(\varrho) = \sum_x \sum_{k,\ell}^{r_x} \ip{\phi_{x,k}}{\varrho \phi_{x,\ell}} \kb{e_{x,k}}{e_{x,\ell}} \, .
\end{equation}
By Theorem \ref{prop:mother} any $\Lambda\in\chan_\A$ can thus be written as
\begin{equation}
\Lambda(\varrho) = \sum_x \sum_{k,\ell}^{r_x} \ip{\phi_{x,k}}{\varrho \phi_{x,\ell}} \Lambda'(\kb{e_{x,k}}{e_{x,\ell}}) 
\end{equation}
for some channel $\Lambda':\lin{\hik}\to\lin{\hi}$.

\subsection{Observables compatible with a given channel}\label{sec:rybar}

Let us look at the converse to the previous consideration; we fix a channel $\Lambda$ and denote by $\obs_\Lambda$ the set of all observables compatible with $\Lambda$.
Also the set $\obs_\Lambda$ has some elementary properties, namely, 
\begin{itemize}
\item $\obs_\Lambda$ is \emph{convex}, i.e.,~if $\A_1$ and $\A_2$ are compatible with $\Lambda$, then also all their mixtures $t \A_1 + (1-t) \A_2$, $0<t<1$, are compatible with $\Lambda$, 
\item $\obs_\Lambda$ is \emph{closed under post-processing}, i.e.,~if $\A\in\obs_{\Lambda}$, then also $\mu \circ \A \in\obs_\Lambda$
for all post-processings $\mu$. 
A post-processing $\mu$ is given by formula
\begin{equation}
(\mu \circ \A)(x) = \sum_y \mu_{xy} \A(y) \, , 
\end{equation}
where $\mu_{xy}$ is a stochastic matrix,
\item $\obs_\Lambda$ contains all \emph{trivial observables} $\T$, $\T(x)=p(x)\id$ for some probability distribution $p$.
\end{itemize}

The structure of $\obs_\Lambda$ can be inferred from the results presented in \cite{HeMi17}.
However, we find it useful to give a selfcontained derivation of the charaterization of $\obs_\Lambda$.
To formulate it, we recall that any channel $\Lambda:\lh\to\lh$ has a Stinespring dilation $(\hik,V)$, where $V:\hi \to \hi \otimes \hik$ is an isometry and $\Lambda(\varrho)=\ptr{\hik}{V\varrho V^{\ast}}$.
The dilation also gives another channel $\bar{\Lambda}:\lh\to\lin{\hik}$ by formula $\bar{\Lambda}(\varrho)=\ptr{\hi}{V\varrho V^{\ast}}$.
This channel is called a \emph{conjugate channel} (or complementary channel) of $\Lambda$ (see Fig.~\ref{fig:stinespring}).
We further say that a conjugate channel of $\Lambda$ is minimal if it is
related to a minimal Stinespring dilation of $\Lambda$.

\begin{figure}
\begin{center}
\includegraphics[scale=0.4]{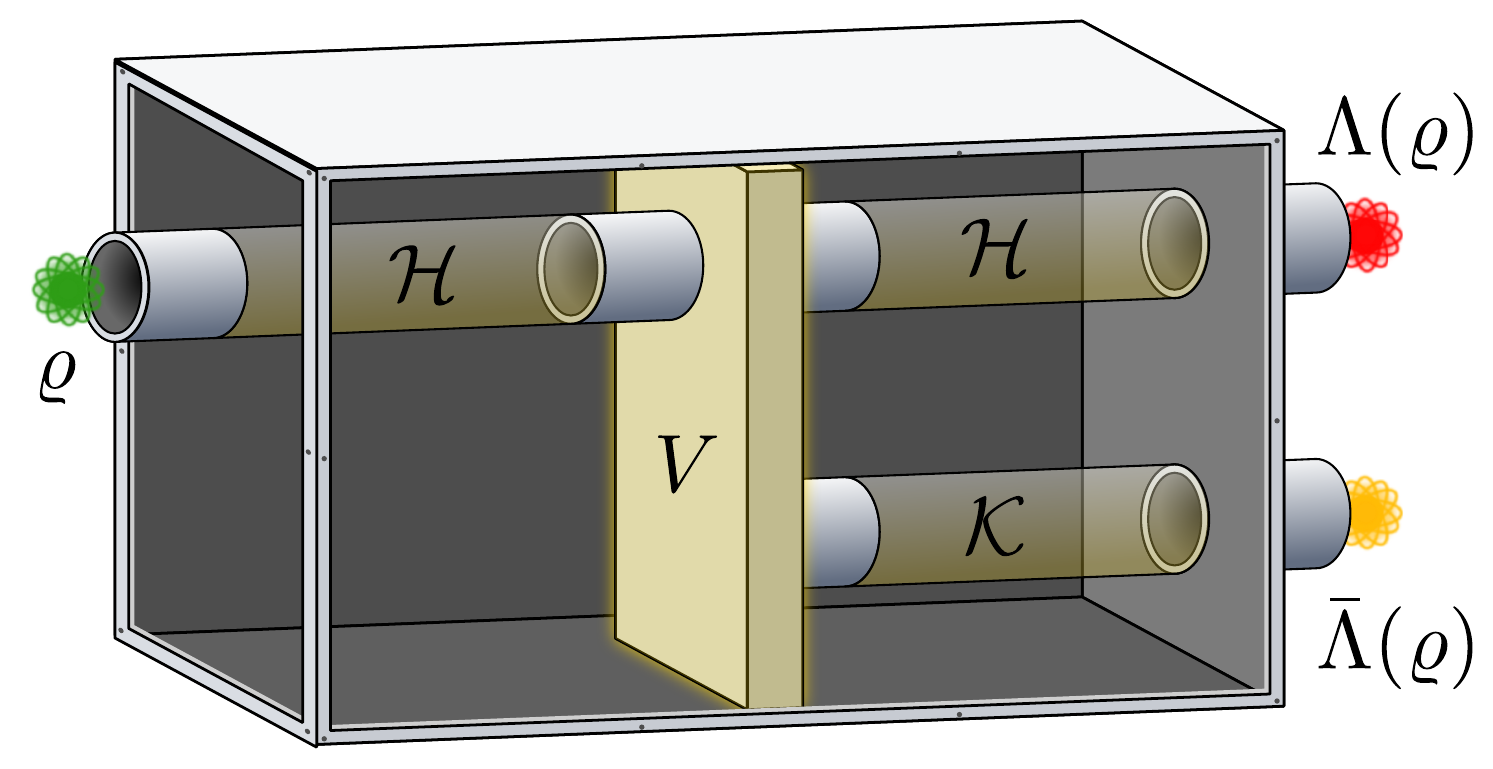}
\end{center}
\caption{\label{fig:stinespring} Every channel $\Lambda$ has a Stinespring dilation expanding the Hilbert space $\hi$ to $\hi\otimes\hik$ by using an isometry $V$. In this way we can define the conjugate channel $\bar{\Lambda}$ of $\Lambda$.}
\end{figure}

\begin{theorem}\label{thm:tuomas}
Let $\Lambda:\lh\to\lh$ be a channel.
Observable $\A$ is compatible with channel $\Lambda$ if and only if the observable can be written in the form $\A(x)=\bar{\Lambda}^{\ast}(\A'(x))$, where $\bar{\Lambda}^{\ast}:\lk\mapsto\lh$ is a fixed (minimal) conjugate channel of $\Lambda$ in the Heisenberg picture and $\A'$ is some observable on $\hik$.
\end{theorem}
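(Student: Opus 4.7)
The plan is to work throughout with the Stinespring picture of $\Lambda$ and to leverage the identity
\begin{equation*}
\bar{\Lambda}^{\ast}(B) \;=\; V^{\ast}(\id\otimes B)\,V,
\end{equation*}
which holds for any Stinespring isometry $V:\hi\to\hi\otimes\hik$ realizing $\Lambda(\varrho)=\ptr{\hik}{V\varrho V^{\ast}}$. This is a one-line consequence of the duality $\tr{\varrho\bar{\Lambda}^{\ast}(B)}=\tr{\bar{\Lambda}(\varrho)B}$ and it provides the bridge between effects on the dilation space and effects on the system side.

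For the ``if'' direction, given a factorization $\A(x)=\bar{\Lambda}^{\ast}(\A'(x))$, I would introduce the candidate instrument
\begin{equation*}
\Phi_x(\varrho) \;=\; \ptr{\hik}{V\varrho V^{\ast}\,(\id\otimes\A'(x))}.
\end{equation*}
Each $\Phi_x$ is completely positive and trace-non-increasing, summing over $x$ together with $\sum_x\A'(x)=\id_{\hik}$ recovers $\Lambda$, and the key identity immediately gives $\tr{\Phi_x(\varrho)}=\tr{\varrho\A(x)}$, so \eqref{eq:Kraus} is verified.

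For the converse I would begin with a measurement model $(\hi_a,\xi,U,\F)$ witnessing compatibility. Ozawa's theorem allows me to assume $\xi=\kb{\xi_0}{\xi_0}$ is pure, so the recipe $W\psi := U(\psi\otimes\xi_0)$ defines an isometry $W:\hi\to\hi\otimes\hi_a$ making $(W,\hi_a)$ a (possibly non-minimal) Stinespring dilation of $\Lambda$. Rewriting \eqref{eq:memo1} then reads $\tr{\varrho\A(x)}=\tr{\varrho\,W^{\ast}(\id\otimes\F(x))W}$ for every $\varrho$, so $\A(x)=W^{\ast}(\id\otimes\F(x))W$, which is a factorization through the conjugate channel associated to \emph{this particular} dilation.

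The remaining and main step, where care is needed, is to transfer this factorization to the fixed minimal conjugate channel $\bar{\Lambda}^{\ast}$. For this I would appeal to the standard uniqueness-up-to-isometry property of Stinespring dilations: there exists an isometry $J:\hik\to\hi_a$ with $(\id\otimes J)V=W$. Substituting yields
\begin{equation*}
\A(x) \;=\; V^{\ast}(\id\otimes J^{\ast}\F(x)J)\,V \;=\; \bar{\Lambda}^{\ast}\bigl(J^{\ast}\F(x)J\bigr),
\end{equation*}
and the operators $\A'(x):=J^{\ast}\F(x)J$ form a POVM on $\hik$ because $J^{\ast}J=\id_{\hik}$, which concludes the argument.
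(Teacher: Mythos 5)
Your proposal is correct and follows essentially the same route as the paper's proof: the converse direction via the instrument $\Phi_x(\varrho)=\ptr{\hik}{V\varrho V^{\ast}(\id\otimes\A'(x))}$, and the forward direction by purifying the ancilla state, forming the dilation $W\psi=U(\psi\otimes\xi_0)$, and using the minimality of $(\hik,V)$ to obtain an intertwining isometry (your $J$ is the paper's $V'$) that pulls $\F(x)$ back to the POVM $\A'(x)=J^{\ast}\F(x)J$ on $\hik$. No gaps.
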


\begin{proof}
Assume that $\A$ is compatible with $\Lambda$.
By the definition, this means that the conditions \eqref{eq:memo1} and \eqref{eq:memo2} hold for some $\hi_{a}$, $U$, $\xi$ and $\F$, where $\xi$ is a pure state, i.e., $\xi=\kb{\phi}{\phi}$.
We then define an operator $W:\hi\to\hi\otimes\hi_{a}$ as
\begin{equation}
W\psi = U (\psi \otimes \phi) \, .
\end{equation}
This operator satisfies 
\begin{equation}
W^*W=\id_\hi \quad \textrm{and} \quad W\varrho W^* = U(\varrho \otimes \xi)U^*
\end{equation}
 for all states $\varrho$ on $\hi$.
We conclude that $(\hi_{a},W)$ is a Stinespring dilation of $\Lambda$ and
\begin{equation}
\A(x)=W^{\ast}(\id_{\hi}\otimes\F(x))W \, .
\end{equation}
Let us then fix the minimal Stinespring dilation $(\hik,V)$ for the channel $\Lambda$. 
From the minimality follows that $W=(\id_{\hi}\otimes V')V$, where $V':\hik\to\hi_{a}$ is an isometry.
Therefore,
\begin{equation}\label{eq:prop2obs}
\A(x)=V^{\ast} (\id_{\hi}\otimes V'^*\F(x)V')V \, .
\end{equation}
Denoting $\A'(x) = V'^*\F(x)V$ we obtain the claimed form.

Conversely, assume that $\A'$ is some observable on $\hik$ and $\A(x)=\bar{\Lambda}^{\ast}(\A'(x))$ for some (minimal) Stinespring dilation $(\hik,V)$ of $\Lambda$.
Then we define an instrument
\begin{equation}
\Phi_x(\varrho) =  \ptr{\hik}{ V\varrho V^*(\id_\hi\otimes \A'(x)) } \, .
\end{equation}
This instrument fullfills the conditions in \eqref{eq:Kraus}, hence showing that $\A$ and $\Lambda$ are compatible.
\end{proof}

Let us note that the condition on the minimality of the dilation  in Theorem \ref{thm:tuomas}  is not necessary. However, we found it convenient to use a concrete form of the minimal Stinespring dilation $(\hik,V)$ of $\Lambda$. In particular, fix an orthonormal basis $\{e_k\}_{k=1}^N$ for $\hik$ and for each $k$ we define an operator $M_k\in\lh$ via 
\begin{equation}
\ip{\psi}{M_k\varphi}=\ip{\psi \otimes e_k}{ V \varphi} \, , \quad \psi,\varphi\in\hi \, .
\end{equation}
Then the operators $M_k$ form a minimal set of Kraus operators for the channel $\Lambda$ and
we obtain
\begin{equation}\label{eq:tuomas}
\A(x)=\sum_{k,l} \ip{e_k}{\A'(x) e_l} M_k^{\ast}M_l \, .
\end{equation}
Here again $\A'$ is some observable on $\hik$.

\section{Incompatibility of unbiased qubit observables and Pauli channels}\label{sec:pauli}

\subsection{Unbiased qubit observables and Pauli channels}

An observable $\A$, acting on a Hilbert space $\hi$, is called \emph{unbiased} if it maps the maximally mixed state $\frac{1}{d} \id$ to the uniform probability distribution of outcomes $x$, i.e.,~$\tr{\A(x)}=\frac{d}{n}$, where $n$ is the number
of outcomes and $d$ is the dimension of $\hi$. 
In the case of qubit observables the unbiasedness condition implies
that effects have the form $\A(x)=\frac{1}{n}[\id +
\vec{a}(x)\cdot\vec{\sigma}]$, where $\vec{a}(x)\in\real^3$ and
$\vec{\sigma}=(\sigma_1,\sigma_2,\sigma_3)$
is the vector of Pauli operators. In what follows we will use the notation
$\sigma_0=\id$.

In particular, for binary (i.e. two outcome) qubit observables
the unbiasedness condition means that observables are of the form
\begin{align}
\A_{s,\vec{n}}(\pm) & = \half \bigl( \id \pm s \vec{n}\cdot\vec{\sigma} \bigr)
\end{align}
for $s\in[0,1]$ and $\vec{n}\in\real^3$, $\no{\vec{n}}=1$. 
We will also use the notation $\X_s$, $\Y_s$ and $\Z_s$ for observables $\A_{s,\vec n}$ with $\vec{n}=(1,0,0)$, $\vec{n}=(0,1,0)$ and $\vec{n}=(0,0,1)$, respectively.

We notice that an observable $\A_{t,\vec{n}}$ is a post-processing of another observable $\A_{s,\vec{n}}$ if and only if $t\leq s$. 
Namely, if $s\neq 0$, then for any $t\in [0,1]$ the operator $\A_{t,\vec{n}}(+)$ can be written as a linear combination of $\A_{s,\vec{n}}(+)$ and $\A_{s,\vec{n}}(-)$ in a unique way,
\begin{equation}
\A_{t,\vec{n}}(+) = \frac{s + t}{2s} \A_{s,\vec{n}}(+) + \frac{s - t}{2s} \A_{s,\vec{n}}(-) \, .
\end{equation}
This is a valid post-processing if and only if 
\begin{equation}
0\leq (s \pm t)/2s \leq 1
\end{equation}
which is equivalent to $t \leq s$. 
We can therefore interpret the parameter $s$ as the degree of noise inherent in $\A_{s,\vec{n}}$.

Let us note that the set of effects $\A_{s,\vec{n}}(\pm)$ (thus also the set of unbiased binary qubit observabeles) is convex. Indeed the effects $\A_{s,\vec{n}}(\pm)$ are positive operators of unit trace, hence, they formally correspond to density operators and as such, can be visualized as points inside the Bloch ball.
An unbiased binary qubit observable thus corresponds to a pair of points inside the Bloch ball, and the points are symmetric with respect to the origin. 

Using the analogy with observables, we say that a channel $\Lambda$ is unbiased if it keeps the maximally mixed state invariant, i.e., $\Lambda(\tfrac{1}{d}\id)=\tfrac{1}{d}\id$. 
This property is obviously equivalent with unitality, thus, the notion of unbiased channels is just
a synonym for \emph{unital channels}.
In the case of qubits it is further known that the set
of unital channels coincides with the set of \emph{random unitary channels} \cite{RuSzWe02,GQS06}.
A prominent class of random unitary qubit channels are the so-called Pauli channels, and in the following we shall concentrate on that class. 

A \emph{Pauli channel} $\Psi_{\vec{p}}$ is a qubit channel of the form
\begin{equation}
\Psi_{\vec{p}}(\varrho) = \sum_{j=0}^3 p_j \, \sigma_j \varrho \sigma_j  \, , 
\end{equation}
where $\vec{p}\in\real^4$ is a probability vector, i.e.~$0 \leq p_j \leq 1$ and $\sum_{j=0}^3 p_j = 1$.
Due to the normalization of $\vec{p}$, a Pauli channel $\Psi_{\vec{p}}$ is determined already by three of the components, e.g.,~$p_1,p_2,p_3$.
We can therefore visualize the set of Pauli channels as a tetrahedron
in $\real^3$; see Fig. \ref{fig:triangle}. 
We denote by $\Gamma$ the completely depolarizing channel on the maximally mixed state $\frac{1}{2}\id$, and it corresponds to the probability vector $\vec{p}=(\tfrac{1}{4},\tfrac{1}{4},\tfrac{1}{4},\tfrac{1}{4})$.

\begin{figure}
\begin{center}
\includegraphics[width=5.0cm]{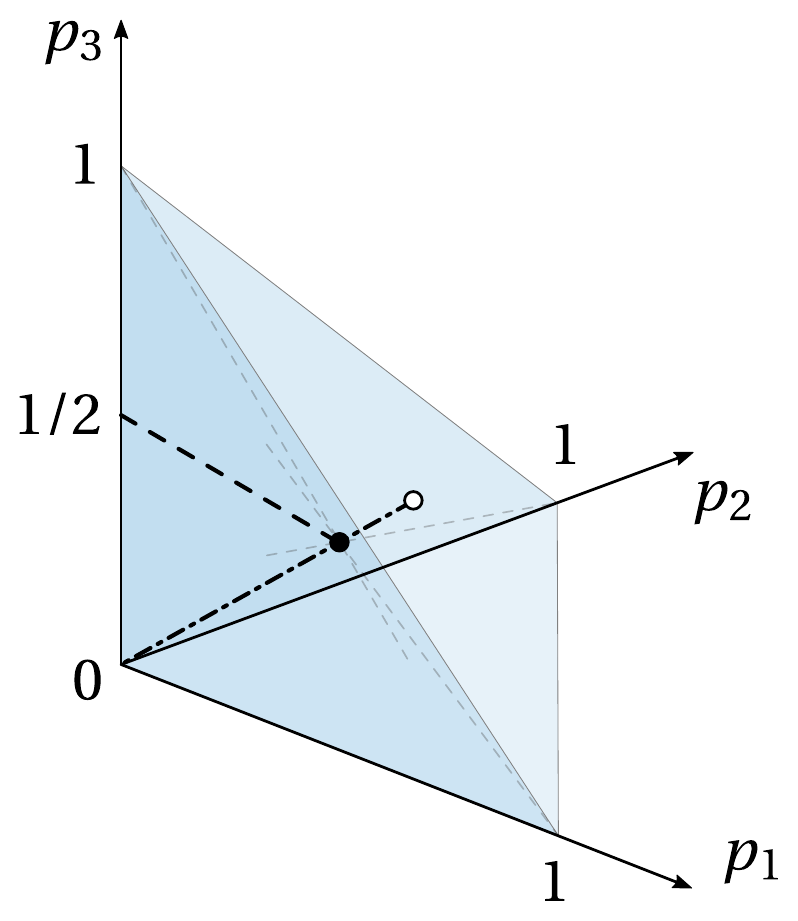}
\end{center}
\caption{\label{fig:triangle} Pauli channels can be parametrized by the points inside the tetrahedron.
The identity channel corresponds to the origin. The three Pauli unitaries correspond to the remaining three vertices of the tetrahedron. 
The edges correspond to Pauli channels having two of the probabilities zero. Special classes of Pauli channels are: (1) the class of partially depolarizing channels (dot-dashed line) including the totally depolarizing channel (solid point) and the quantum NOT (hollow point); (2)  the class of measure and prepare channels (dashed line), here shown for measurements in the $z$-direction.
}
\end{figure}

\subsection{Compatibility condition}
Let $\A$ be a qubit observable compatible with a Pauli
channel $\Psi_{\vec{p}}$.  By concatenating $\Psi_{\vec{p}}$ with a
Pauli unitary channel we generate three new Pauli channels that are
compatible with $\A$. 
Using Prop.~\ref{prop:unitary} with $W=\sigma_i$ and $U=\id$ we conclude the following.

\begin{proposition}\label{prop:permu}
Let $\A$ be a qubit observable compatible with a Pauli channel $\Psi_{\vec{p}}$, $\vec{p}=(p_0,p_1,p_2,p_3)$. Then $\A$ is also compatible with Pauli channels with the following probability vectors:
\begin{itemize}
\item $(p_1,p_0,p_3,p_2)$,
\item $(p_2,p_3,p_0,p_1)$,
\item $(p_3,p_2,p_1,p_0)$.
\end{itemize}
\end{proposition}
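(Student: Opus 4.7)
The plan is to derive all three statements as instances of Proposition~\ref{prop:unitary}, applied with $V = \id$ and $W = \sigma_i$ for $i = 1, 2, 3$. With $V = \id$ the transformed observable $V^{\ast} \A V$ is simply $\A$, so Proposition~\ref{prop:unitary} specializes to the assertion that $\A$ is compatible with $\Psi_{\vec{p}}$ if and only if $\A$ is compatible with $\widetilde{\sigma_i} \circ \Psi_{\vec{p}}$. The entire task therefore reduces to identifying each of these three composite channels with a Pauli channel and reading off its probability vector.

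The key computation is to expand
\[
(\widetilde{\sigma_i} \circ \Psi_{\vec{p}})(\varrho) = \sum_{j=0}^{3} p_j \, \sigma_i \sigma_j \varrho \sigma_j \sigma_i
\]
and simplify every term using $\sigma_0 = \id$, $\sigma_i^2 = \id$, and the anticommutation of distinct nonzero Pauli matrices. Because the phase $\pm \ii$ arising from $\sigma_i \sigma_j = \pm \ii\, \sigma_k$ (for distinct $i,j \in \{1,2,3\}$) appears identically on both sides of $\varrho$, the two factors cancel and each term collapses to $\sigma_{\pi_i(j)} \varrho \sigma_{\pi_i(j)}$ for a fixed-point-free involution $\pi_i$ of $\{0,1,2,3\}$ that is determined by the Pauli multiplication table. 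In particular, $\widetilde{\sigma_i} \circ \Psi_{\vec{p}}$ is again a Pauli channel, with its probability vector obtained from $\vec{p}$ by applying the involution $\pi_i$.

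Reading off the three involutions from the multiplication table gives $\pi_1 = (0\,1)(2\,3)$, $\pi_2 = (0\,2)(1\,3)$, and $\pi_3 = (0\,3)(1\,2)$, which permute $(p_0,p_1,p_2,p_3)$ into exactly the three vectors listed in the statement. Combined with the three applications of Proposition~\ref{prop:unitary}, this yields the claim. There is no substantial obstacle here; the argument is essentially a symmetry statement plus Pauli bookkeeping, and the only point demanding care is verifying that the imaginary phases genuinely square away in the sandwich $\sigma_i \sigma_j \varrho \sigma_j \sigma_i$, so that no unwanted signs appear in the resulting probability vector.
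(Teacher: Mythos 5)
Your proof is correct and follows exactly the paper's route: the paper likewise derives the proposition by applying Proposition~\ref{prop:unitary} with $W=\sigma_i$ and trivial $V$, identifying $\widetilde{\sigma_i}\circ\Psi_{\vec{p}}$ as a Pauli channel with permuted probability vector. Your explicit verification of the involutions $(0\,1)(2\,3)$, $(0\,2)(1\,3)$, $(0\,3)(1\,2)$ and the cancellation of the $\pm\ii$ phases is accurate and in fact spells out the bookkeeping the paper leaves implicit.
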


In conclusion, for a fixed qubit observable $\A$, the probability vectors $\vec{p}=(p_0,p_1,p_2,p_3)$ that correspond to Pauli channels $\Psi_{\vec{p}}$ compatible with the observable form a convex region inside of the tetrahedron in Fig. \ref{fig:triangle}, and this region has the permutational symmetry described in Prop.~\ref{prop:permu}.

Let then $\Lambda$ be a qubit channel compatible with an unbiased qubit observable $\A_{s,\vec{n}}$. 
As we have seen earlier, any observable $\A_{t,\vec{n}}$ with $t \leq s$ is a post-processing of $\A_{s,\vec{n}}$. 
It follows that also $\A_{t,\vec{n}}$ is compatible with $\Lambda$. 
For any unit vector $\vec{n}$, it thus makes sense to seek for the largest $s$ such that $\A_{s,\vec{n}}$ and $\Lambda$ are compatible. 

The next result gives a sufficient and necessary condition for an unbiased binary qubit observable $\A_{s,\vec{n}}$ and Pauli channel $\Psi_{\vec{p}}$ to be (in)compatible.
The compatibility properties of a Pauli channel $\Psi_{\vec{p}}$ are determined by the vector $\vec{p}$, and to formulate the condition we denote
\begin{eqnarray}
{p}_{\pm}[1] &:=& 2(\sqrt{p_0p_1}\pm\sqrt{p_2p_3}) \, ,\notag\\
{p}_{\pm}[2] &:=& 2(\sqrt{p_0p_2}\pm\sqrt{p_1p_3}) \, ,  \\
{p}_{\pm}[3] &:=& 2(\sqrt{p_0p_3}\pm\sqrt{p_1p_2}) \, .\notag
\end{eqnarray}
We observe that $|p_-[j]|\leq p_+[j]$ for $j=1,2,3$. 
Furthermore all $p_j[+]$ are invariant under the permutations described in Prop.~\ref{prop:permu}.

\begin{theorem}\label{prop:maincomp}
An unbiased binary qubit observable $\A_{s,\vec{n}}$ and a Pauli channel $\Psi_{\vec{p}}$ are compatible if and only if
\begin{align}\label{eq:sufnec}
\frac{s^2n_1^2}{{p}_+[1]^2}+\frac{s^2n_2^2}{{p}_+[2]^2}+\frac{s^2n_3^2}{{p}_+[3]^2}\leq 1 \, .
\end{align}
This inequality is understood in a way that if ${p}_+[j]=0$, then necessarily
the whole term vanishes.
\end{theorem}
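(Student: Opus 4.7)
I plan to apply Theorem~\ref{thm:tuomas} with the Kraus operators $M_j := \sqrt{p_j}\,\sigma_j$ of the Pauli channel $\Psi_{\vec p}$. Using formula~\eqref{eq:tuomas}, compatibility of $\A_{s,\vec n}$ with $\Psi_{\vec p}$ becomes equivalent to the existence of a Hermitian $M$ on $\complex^4$ with $\|M\|_\infty\le 1$ such that $\bar{\Psi}_{\vec p}^{\ast}(M) = s\,\vec n\cdot\vec\sigma$, where $\bar{\Psi}_{\vec p}^{\ast}(X) = \sum_{j,k=0}^{3}\sqrt{p_j p_k}\,X_{jk}\,\sigma_j\sigma_k$ and the effect in Theorem~\ref{thm:tuomas} is written as $A' = \tfrac{1}{2}(\id+M)$; this reparametrization absorbs the trace normalization since $\bar{\Psi}_{\vec p}^{\ast}(\id)=\id$.

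The heart of the argument is the following construction. For $j\in\{1,2,3\}$ with $\{j,k,l\}=\{1,2,3\}$, I will introduce the Hermitian matrices $M^{(j)} := R_{0j}+\eta_j\,I_{kl}$, where $R_{ab}:=\kb{e_a}{e_b}+\kb{e_b}{e_a}$, $I_{ab}:=i(\kb{e_a}{e_b}-\kb{e_b}{e_a})$, and the signs $\eta_j\in\{\pm1\}$ are fixed using the Pauli identities $\sigma_j^{2}=\id$ and $\sigma_k\sigma_l = i\varepsilon_{klm}\sigma_m$ so that a direct evaluation yields $\bar{\Psi}_{\vec p}^{\ast}(M^{(j)}) = p_+[j]\,\sigma_j$. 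A short algebraic verification will then establish the Clifford relations $(M^{(j)})^{2} = \id$ and $\{M^{(j)},M^{(k)}\}=0$ for $j\ne k$. Assuming inequality~\eqref{eq:sufnec}, for the sufficiency direction I set $t_j := sn_j/p_+[j]$ (with the convention $t_j := 0$ when $p_+[j]=0$, in which case the ellipsoid inequality forces $sn_j=0$) and put $M:=\sum_j t_j\,M^{(j)}$; the Clifford relations give $M^{2}=\bigl(\sum_j t_j^{2}\bigr)\id$, whence $\|M\|_\infty=\sqrt{\sum_j t_j^{2}}\le 1$, while $\bar{\Psi}_{\vec p}^{\ast}(M)=\sum_j t_j\,p_+[j]\,\sigma_j = s\,\vec n\cdot\vec\sigma$, so $A'=\tfrac{1}{2}(\id+M)$ witnesses compatibility.

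Necessity I would derive by duality. The adjoint identity $\tr{\bar{\Psi}_{\vec p}^{\ast}(M)X}=\tr{M\,\bar{\Psi}_{\vec p}(X)}$ together with $|\tr{MY}|\le\|M\|_\infty\|Y\|_1$ gives, for every $\vec c\in\real^3$, the inequality $2\bigl|\sum_j c_j\,sn_j\bigr| \le \|\bar{\Psi}_{\vec p}(\vec c\cdot\vec\sigma)\|_1$. The main technical obstacle is the evaluation of this trace norm. Expanding $\bar{\Psi}_{\vec p}(\sigma_j)$ in the $R_{ab}, I_{ab}$ basis, the matrix $N := \bar{\Psi}_{\vec p}(\vec c\cdot\vec\sigma)$ is Hermitian with vanishing diagonal and, in the $1{+}3$ block decomposition along $e_0$, has the bordered form
\begin{equation*}
N \;=\; \begin{pmatrix} 0 & \vec\alpha^{\,T} \\ \vec\alpha & i[\vec\beta\times]\end{pmatrix},
\end{equation*}
with $\alpha_j = 2c_j\sqrt{p_0 p_j}$, $\beta_j = 2c_j\sqrt{p_k p_l}$, and $[\vec\beta\times]$ the antisymmetric cross-product matrix. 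A direct computation then yields $\tr{N}=0$, $\tr{N^{2}} = 2(\|\vec\alpha\|^{2}+\|\vec\beta\|^{2})$, and (by cofactor expansion, exploiting $[\vec\beta\times]\vec\beta=0$) $\det N = (\vec\alpha\cdot\vec\beta)^{2}\ge 0$; hence the spectrum of $N$ consists of pairs $\{\pm\mu,\pm\nu\}$ with $\mu^{2}+\nu^{2} = \|\vec\alpha\|^{2}+\|\vec\beta\|^{2}$ and $\mu\nu = \vec\alpha\cdot\vec\beta = 4\sqrt{p_0 p_1 p_2 p_3}\sum_j c_j^{2}$. Consequently $(\mu+\nu)^{2} = \|\vec\alpha+\vec\beta\|^{2} = \sum_j c_j^{2}\,p_+[j]^{2}$ and $\|N\|_1 = 2(\mu+\nu) = 2\sqrt{\sum_j c_j^{2}\,p_+[j]^{2}}$. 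Substituting $c_j := sn_j/p_+[j]^{2}$ in the duality bound and rearranging yields the ellipsoid inequality~\eqref{eq:sufnec}, with the degenerate case $p_+[j]=0$ handled by a direct inspection of the construction.
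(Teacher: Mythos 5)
Your argument is sound, and on the necessity side it takes a genuinely different route from the paper. The sufficiency half is essentially the paper's own construction in disguise: with your signs $\eta=(-1,+1,-1)$ the operators $M^{(j)}$ do satisfy $\bar{\Psi}_{\vec p}^{\ast}(M^{(j)})=p_+[j]\sigma_j$, $(M^{(j)})^2=\id$ and pairwise anticommutation (I checked the sign bookkeeping is consistent), and $\tfrac12\bigl(\id+\sum_j t_jM^{(j)}\bigr)$ with $t_j=s_{max}n_j/p_+[j]$ is exactly the projection $\A'(+)$ of Eq.~\eqref{eq:suffA}; your Clifford-algebra phrasing just makes the effect property and the image transparent. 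For necessity, the paper writes the problem as a primal SDP and exhibits an explicit dual-feasible pair $(\lambda,\vec m)$ with value $s_{max}$, block-diagonalizing $\vec m\cdot\vec\Sigma$ in the eigenbasis of $\A'(+)$; you instead use the elementary duality $|\tr{MY}|\leq\no{M}\,\no{Y}_1$ applied to test operators $\vec c\cdot\vec\sigma$ and compute $\no{\vec c\cdot\vec\Sigma}_1$ in closed form. This is in effect a hand-rolled version of the same dual, but it avoids the SDP machinery, and your closed form $\no{\vec c\cdot\vec\Sigma}_1=2\bigl(\sum_j c_j^2p_+[j]^2\bigr)^{1/2}$ is correct (the adjugate identity $\mathrm{adj}(i[\vec\beta\times])=-\vec\beta\vec\beta^{T}$ confirms $\det N=(\vec\alpha\cdot\vec\beta)^2$); the optimizing $\vec c$ reproduces the ellipsoid, and taking $\vec c=\vec e_j$ when $p_+[j]=0$ gives $sn_j=0$ directly, so the degenerate cases need no separate construction.

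One step must be added to your spectral argument: $\tr{N}=0$, $\tr{N^2}$ and $\det N\geq 0$ alone do not imply that the spectrum has the form $\{\pm\mu,\pm\nu\}$ (the spectrum $\{3,1,-2,-2\}$ has zero trace and positive determinant). You also need the cubic invariant to vanish. For your bordered matrix this is immediate: $\tr{N^3}=3\,\vec\alpha^{T}\bigl(i[\vec\beta\times]\bigr)\vec\alpha+\tr{\bigl(i[\vec\beta\times]\bigr)^3}=0$, since $\vec\alpha\cdot(\vec\beta\times\vec\alpha)=0$ and $[\vec\beta\times]^3=-\no{\vec\beta}^2[\vec\beta\times]$ is traceless. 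Together with $\tr{N}=0$ this makes the characteristic polynomial biquadratic, $\lambda^4+e_2\lambda^2+e_4$, which is what actually forces the $\pm$-pair structure; from there $\mu\nu=\sqrt{\det N}=\vec\alpha\cdot\vec\beta\geq0$ and $(\mu+\nu)^2=\no{\vec\alpha+\vec\beta}^2$ follow as you state. With that line inserted, your proof is complete.
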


Before we present the proof of this statement, let us discuss the content of Ineq.~\eqref{eq:sufnec}.
Firstly, suppose that a Pauli channel $\Psi_{\vec{p}}$ is fixed and that ${p}_+[j]\neq 0$ for every $j=1,2,3$.
For a vector $s\vec{n}$, the inequality \eqref{eq:sufnec} is a solid ellipsoid (see Fig.~\ref{fig:ellipsoid}).
\begin{figure}
\begin{center}
\includegraphics[width=3.5cm]{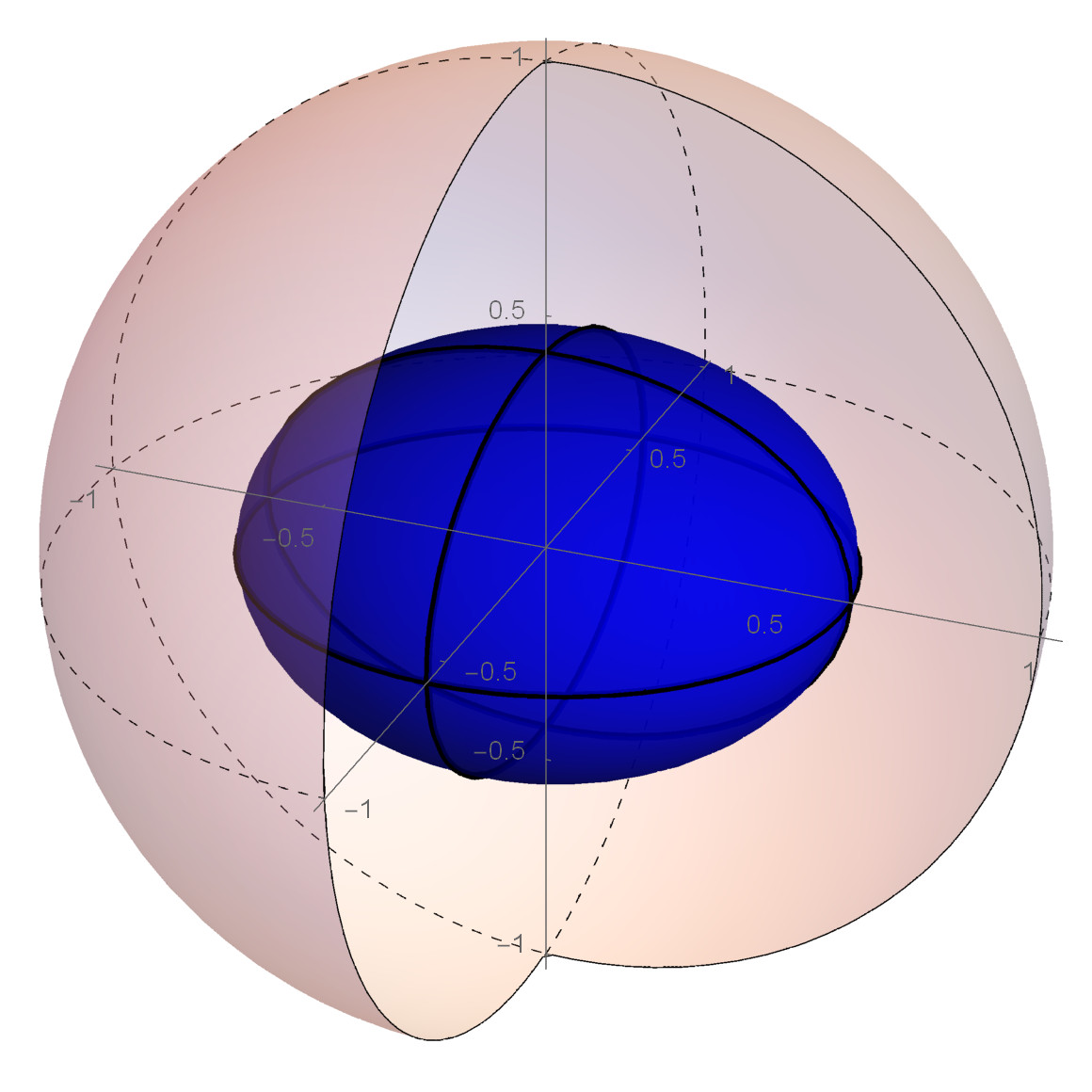}\hskip5mm\includegraphics[width=3.5cm]{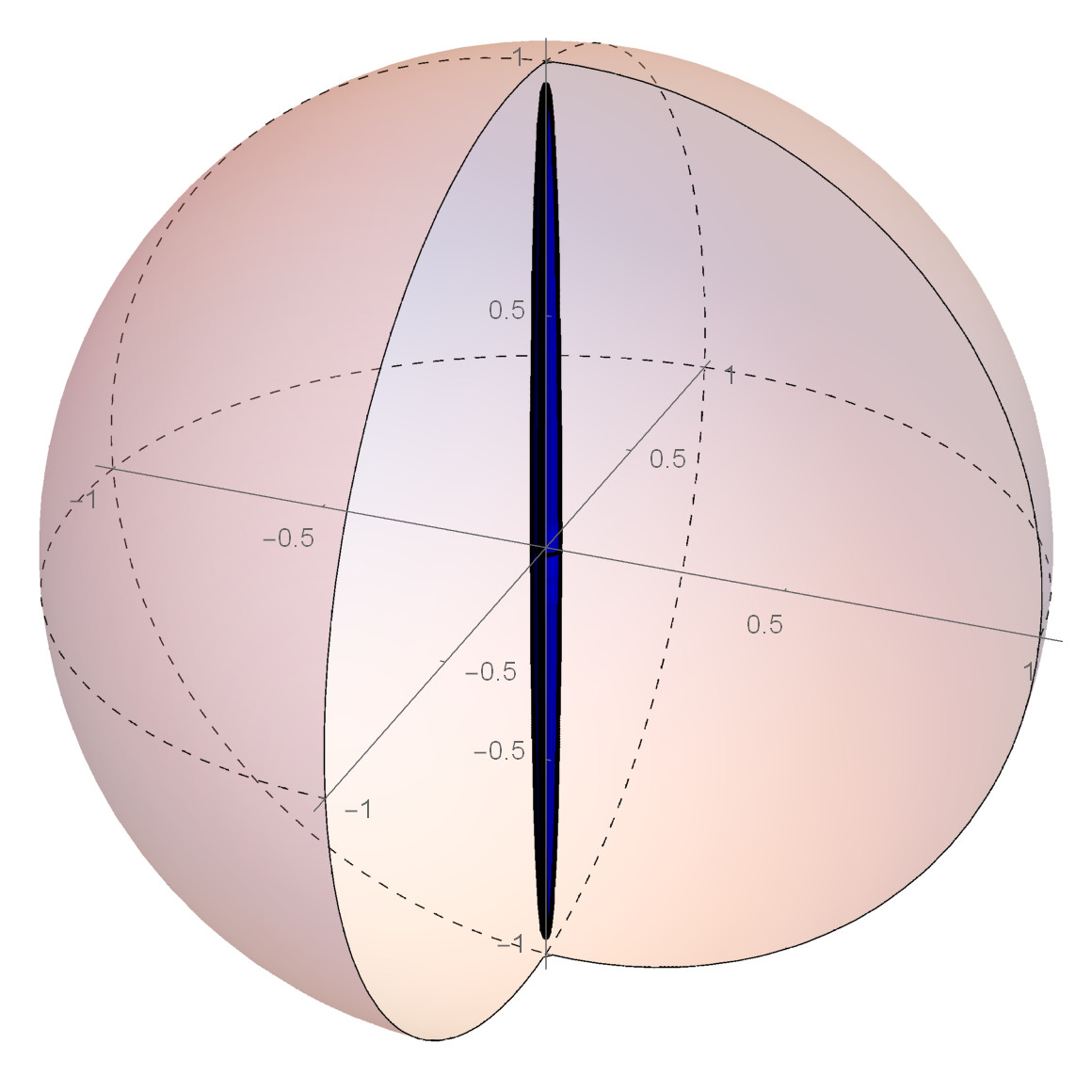}
\end{center}
\caption{\label{fig:ellipsoid}Compatibility region for Pauli channel $\Psi_{\vec{p}}$ and observable $\A_{s,\vec{n}}$ is in general described as an ellipsoid for allowed Bloch vectors of the observable (left figure). In specific cases it collapses to a line (right figure) as e.g.~in the case of phase damping channels.}
\end{figure}
Secondly, for at least one of $p_+[j]$ to be zero, we need to have at least two of the components of $\vec p$ zero, which always makes at least two of $p_+[j]$ zero. In such case the inequality \eqref{eq:sufnec} does not represent a solid ellipsoid but only a line in one of the canonical directions. 
In the most extreme case we have ${p}_+[j]=0$ for all $j=1,2,3$, which occurs when only one of the $p_j$'s is non-zero. Then the ellipsoid collapses to a point $s\vec{n}=\vec{0}$. 
This equation is satisfied only when $s=0$, and this is consistent with our earlier discussion that only a trivial observable is compatible with a unitary channel.

\begin{figure}
\begin{center}
\includegraphics[width=7cm]{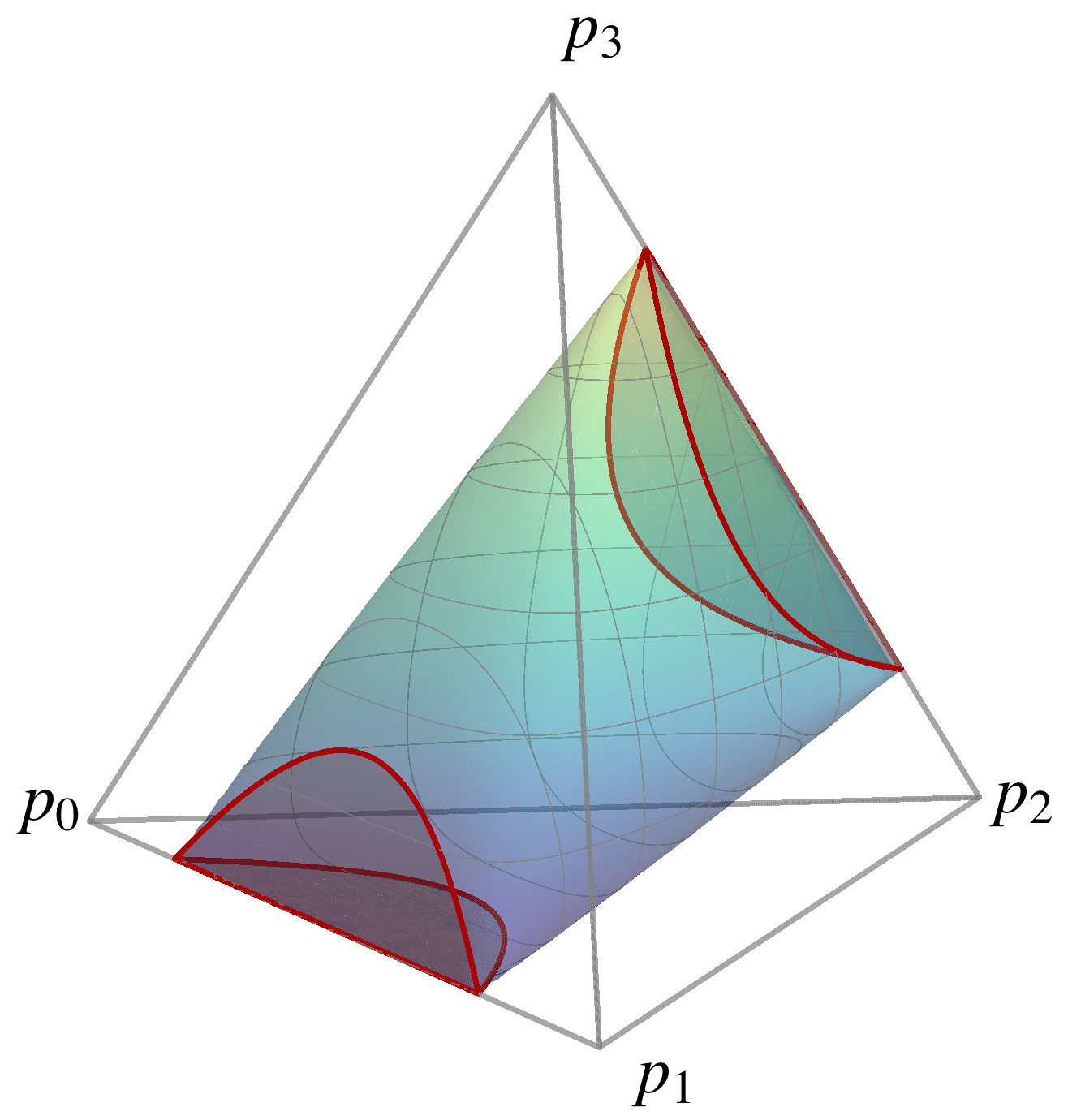}
\end{center}
\caption{\label{fig:uglyregion}Compatibility region for Pauli channel $\Psi_{\vec{p}}$ and observable $\X_{s}$ for allowed $\vec p$ vectors of the Pauli channel. Here, unlike in Fig.~\ref{fig:triangle}, the region of vectors $\vec p$ is depicted as a simplex. The darkened regions are parts laying on the edges of the simplex, i.e.~when (at least) one of the $p_j$'s equals zero.}
\end{figure}

While for constant Pauli channel the set of compatible unbiased qubit observables is rather simply described and visualized, the region of Pauli channels compatible with some given observable that we get from (\ref{eq:sufnec}) is difficult to describe. One such is presented in Fig.~\ref{fig:uglyregion} for observable $\X_s$ for $s=0.8$.

\subsection{Proof of Theorem \ref{prop:maincomp}}

Let $\Psi_{\vec{p}}$ be a Pauli channel.
In the following we will assume that $p_j\neq 0$ for every $j=0,\ldots,3$. 
The required modifications to the proof in the other cases shall be obvious. 

The minimal set of Kraus operators for $\Psi_{\vec{p}}$ is given as $M_k=\sqrt{p_k}\sigma_k$, $k=0,1,2,3$.
Using Theorem \ref{thm:tuomas} and the formula \eqref{eq:tuomas}
derived from it, we obtain observables compatible with $\Psi_{\vec{p}}$ by inserting various choices for $\A'$, which is an observable acting on $\complex^4$.

We fix a unit vector $\vec{n}$ and we seek for allowed $s$ such that $\A_{s,\vec{n}}$ and $\Psi_{\vec{p}}$ are compatible.
It is useful to define a vector $\vec{n}^{\prime}\in\real^3$ as 
\begin{equation}\label{eq:nprime}
n'_j =\frac{n_j}{p_+[j]} \left(\frac{n_1^2}{p_+[1]^2} + \frac{n_2^2}{p_+[2]^2} + \frac{n_3^2}{p_+[3]^2}\right)^{-1/2} \, ,
\end{equation}
then an operator $\A^{\prime}(+)$ as
\begin{align}
\label{eq:suffA}
\A^{\prime}(+)=&\frac{1}{2}\left(
\begin{array}{cccc}
1 & n'_1 & n'_2 & n'_3 \\
n'_1 & 1 & -\ii n'_3 & \ii n'_2\\
n'_2 & \ii n'_3 & 1 & -\ii n'_1\\
n'_3 & -\ii n'_2 & \ii n'_1 & 1 
\end{array}\right) \, , 
\end{align}
and set $\A^{\prime}(-) = \id - \A^{\prime}(+)$. 
Since $\vec{n}^{\prime}$ is a unit vector, one can verify that $\A^{\prime}(\pm)$ are projections and hence form a binary observable. 
Applying equation \eqref{eq:tuomas} we then get
\begin{multline}
\sum_{k,l} \ip{e_k}{\A^{\prime}(x) e_l} \sqrt{p_kp_l}\sigma_k\sigma_l \\
=\frac{1}{2}\left[\id + \left(\frac{n_1^2}{p_+[1]^2} + \frac{n_2^2}{p_+[2]^2} + \frac{n_3^2}{p_+[3]^2}\right)^{-\tfrac{1}{2}}(\vec{n}\cdot\vec{\sigma})\right]
\end{multline}
This means that $\A_{s,\vec{n}}$ and $\Psi_{\vec{p}}$ are compatible for 
\begin{equation}\label{eq:smax}
s= s_{max} \equiv \left(\frac{n_1^2}{p_+[1]^2} + \frac{n_2^2}{p_+[2]^2} + \frac{n_3^2}{p_+[3]^2}\right)^{-\tfrac{1}{2}} \, , 
\end{equation}
and hence also for any value $s\leq s_{max}$.
We have thus seen that the inequality \eqref{eq:sufnec} is a sufficient condition for the compatibility of $\A_{s,\vec{n}}$ and $\Psi_{\vec{p}}$.

In order to prove that the inequality \eqref{eq:sufnec} is also necessary for the compatibility of $\A_{s,\vec{n}}$ and $\Psi_{\vec{p}}$, we will formulate the problem in terms of a semidefinite program (SDP). 
Any feasible instance of primal SDP problem will give us a lower bound on the largest possible $s$, while any feasible instance of the dual SDP problem will give us an upper bound on the largest possible $s$. 
If both bounds coincide, we have found the optimal solution with the largest possible $s$. And this will be indeed the case.

\subsubsection*{Primal SDP problem}
For given vectors $\vec{n}$ and $\vec{p}$, we want to find the largest $s$ such that $\A_{s,\vec{n}}$ and $\Psi_{\vec{p}}$ are compatible. 
Since $\A_{s,\vec{n}}(-)=\id-\A_{s,\vec{n}}(+)$, we can formulate the question in terms of the effect $\A_{s,\vec{n}}(+)$ only. 
Further, we will now understand $\vec{n}$ and $s$ as parameters of the SDP problem and will omit the subscripts from now on. 
We will thus denote $A \equiv \A_{s,\vec{n}}(+)$, and we observe that $s=\tr{A (\vec{n}\cdot\vec{\sigma})}$.

In summary, we are trying to find the maximum of $\tr{A (\vec{n}\cdot\vec{\sigma})}$ over all effects $A$ such that the corresponding binary observable $\A_{s,\vec{n}}$ is compatible with $\Psi_{\vec{p}}$. 
Using Theorem \ref{thm:tuomas}, this is equivalent to  
\begin{align}
\max_{\id\geq \A^{\prime}\geq 0}\tr{\overline{\Psi}^{\ast}_{\vec{p}}(A^{\prime})(\vec{n}\cdot\vec{\sigma})} \, , 
\end{align}
where $A^{\prime}$ is an  effect on the Hilbert space defined by the minimal conjugate channel $\overline{\Psi}^{\ast}_{\vec{p}}$ of $\Psi_{\vec{p}}$. 
We take the conjugate channel that is related to the Kraus operators $M_k=\sqrt{p_k}\sigma_k$ of $\Psi_{\vec{p}}$ and denote $\Sigma_i=\overline{\Psi}_{\vec{p}}(\sigma_i)$. 
Then 
\begin{equation}
\Sigma_i = \sum_{j,k,n=0}^3 \sqrt{p_kp_n} \tr{\sigma_k \sigma_i \sigma_n} \kb{e_k}{e_n} \, , 
\end{equation}
and we obtain
\begin{align}
\Sigma_1=2\left(\begin{array}{cccc}
0 & \sqrt{p_0p_1} & 0 & 0\\
\sqrt{p_0p_1} & 0 & 0 & 0\\
0 & 0 & 0 & -\ii\sqrt{p_2p_3}\\
0 & 0 & \ii\sqrt{p_2p_3} & 0
\end{array}\right),\\
\Sigma_2=2\left(\begin{array}{cccc}
0 & 0 &\sqrt{p_0p_2} & 0 \\
0 & 0 & 0 &\ii\sqrt{p_1p_3}\\
\sqrt{p_0p_2} & 0 & 0 &0\\
0 &  -\ii\sqrt{p_1p_3} & 0 & 0
\end{array}\right),\\
\Sigma_3=2\left(\begin{array}{cccc}
0 & 0 &0 & \sqrt{p_0p_3} \\
0 & 0 &  -\ii\sqrt{p_1p_2} &0\\
0 & \ii\sqrt{p_1p_2} & 0 &0\\
\sqrt{p_0p_3} & 0 & 0 & 0
\end{array}\right).
\end{align}

We can finally write the primal SDP problem as
\begin{align}
s_{\textrm{P}}:=&\max_{A'} \tr{A^{\prime} (\vec{n}\cdot\vec{\Sigma})}\label{eq:primal}\\
&
\begin{aligned}\notag
\text{subject to}\qquad & -A^{\prime} \leq 0,\\
& A^{\prime}-\id \leq 0,\\
& \tr{A^{\prime}(\vec{n}_1\cdot\vec{\Sigma})} = 0,\\
& \tr{A^{\prime}(\vec{n}_2\cdot\vec{\Sigma})} = 0,
\end{aligned}
\end{align}
The vectors $\vec{n}_{1,2}$ are orthogonal to $\vec{n}$ and linearly independent. The last two constraints ensure that the resulting effect $A$ is indeed in the direction defined by $\vec{n}$.

The solution given in \eqref{eq:suffA} is a feasible solution for this primal SDP problem and it thus means that $s_{\textrm{P}} \geq s_{max}$, where $s_{max}$ is given in \eqref{eq:smax}. 

\subsubsection*{Dual SDP problem}
The previous convex optimization problem can be transformed into a dual problem by standard methods \cite{CO04} (see Appendix):
\begin{align}
s_{\textrm{D}}:=&\min_{\lambda,\vec m} \tr{\lambda}\label{eq:dual}\\
&
\begin{aligned}\notag
\text{subject to}\qquad & \lambda\geq 0,\\
& \lambda\geq \vec m\cdot\vec \Sigma,\\
& \vec m\cdot\vec n=1.
\end{aligned}
\end{align}
It is always true that $s_{\textrm{D}} \geq s_{\textrm{P}}$.
Since we have already shown that $s_{\textrm{P}} \geq s_{max}$, the remaining thing is to show that $s_{max} \geq s_{\textrm{D}}$.
We do this by providing a corresponding feasible solution for the dual SDP problem. 

We denote by $Q$ the diagonal matrix with entries $Q_{ij}=\delta_{ij}p_i[+]$.
With this notation the vector $\vec{n}'$ in \eqref{eq:nprime} can be concisely written as $\vec{n}'=Q^{-1}\vec{n}/\no{Q^{-1}\vec{n}}$. 
We then set 
\begin{equation}
\label{eq:dualvars}
\vec{m}=\frac{Q^{-2}\vec n}{\|Q^{-1}\vec n\|^2} \, , \qquad 
\lambda=A^{\prime}(\vec m\cdot \vec \Sigma)A^{\prime} \, , 
\end{equation}
where $A^{\prime}=\A^{\prime}(+)$ is given in \eqref{eq:suffA}.
This choice of $\vec{m}$ fulfills the third dual condition and moreover we have
\begin{align}
\tr{\lambda}&=\tr{A'(\vec m\cdot\vec \Sigma)A'}=\tr{A'(\vec m\cdot\vec \Sigma)}\notag\\
&=\tr{A'(\vec n\cdot\vec \Sigma)}= s_{max} \, ,
\end{align}
where we first used the definition of $\lambda$, then in the next equality we used that $(A')^2=A'$ and in the third equality we used the fact that the trace for components of $\vec m$ orthogonal to $\vec n$ is zero.

It remains to show that $\lambda\geq 0$ and $\lambda\geq \vec{m}\cdot \vec{\Sigma}$. For this we will rewrite $\lambda$ in the basis of eigenvectors of $A^{\prime}$. 
The operator $A^{\prime}$ is a two-dimensional projection, hence it has doubly degenerate eigenvalues 1 and 0.
The eigenvalue 1 eigenvectors can be chosen to be
\begin{equation}
v_\pm=\frac{1}{2\sqrt{1\pm n_1'}}
\begin{pmatrix}
n_2'\pm in_3'\\
\mp n_2'-in_3'\\
1\pm n_1'\\
\pm i(1\pm n_1')
\end{pmatrix}.
\end{equation}
The eigenvalue 0 eigenvectors can be chosen to be
\begin{equation}
u_\pm=\frac{1}{2\sqrt{1\pm n_1'}}
\begin{pmatrix}
-n_2'\pm in_3'\\
\mp n_2'+in_3'\\
1\pm n_1'\\
\mp i(1\pm n_1')
\end{pmatrix}.
\end{equation}
Note that these eigenvectors are valid when $n_1'\neq\pm 1$; the excluded points have a different eigenbasis, however the remainder of the solution is analogous to what follows.
The eigenbasis $(v_+,v_-,u_+,u_-)$ is orthonormal and the operator $\vec m\cdot\vec \Sigma$ in this basis has a block-diagonal form,
\begin{equation}
\vec m\cdot\vec\Sigma=\begin{pmatrix}
M & 0\\
0 & -M^*
\end{pmatrix},
\end{equation}
where
\begin{equation}
M=\frac{s}{2}(\id+\vec g\cdot \vec \sigma)
\end{equation}
is a qubit operator with
\begin{equation}
\vec g=\frac{1}{\sqrt{1-n_1'^2}}
\begin{pmatrix}
\frac{p_-[2]}{p_+[2]}n_2'^2-\frac{p_-[3]}{p_+[3]}n_3'^2\\
\left(\frac{p_-[2]}{p_+[2]}+\frac{p_-[3]}{p_+[3]}\right)n_2'n_3'\\
-n_1'\sqrt{1-n_1'^2}\frac{p_-[1]}{p_+[1]}
\end{pmatrix}.
\end{equation}
Hence we have
\begin{equation}
\lambda=\begin{pmatrix}
M & 0\\
0 & 0
\end{pmatrix}, \qquad
\lambda -\vec m\cdot\vec\Sigma=\begin{pmatrix}
0 & 0\\
0 & M^{\ast}
\end{pmatrix}. 
\end{equation}
Therefore we need to check only the positivity of $M$.
The positivity of $M$ is in this case equivalent to the condition $\|\vec g\|\leq 1$. Using the fact that $(p_-[j]/p_+[j])^2\leq 1$ this is easily checked. 

To sum up, we have shown that our choices for primary and dual variables are feasible solutions that lead to the same values. Hence, this choice is optimal and the boundary given by $s_{max}$ given in (\ref{eq:smax}) is not only sufficient but also necessary.

\section{Examples}\label{sec:examples}

We will now demonstrate the use of the presented compatibility condition by looking at some concrete classes of qubit channels. 

\subsection{Partially depolarizing channels}

A partially depolarizing channel is an example of a Pauli channel. 
It is constructed as a mixture of the identity channel and the completely depolarizing channel $\dep$ to the maximally mixed state $\half\id$. 
As the mixing weight can vary, we get a one-parameter class of channels
\begin{equation}\label{eq:p}
\dep_p(\varrho)=(1-4p)\varrho+2p\id \, ,
\end{equation}
where $p\in [0,1/4]$.
The channel $\dep_p$ is, in fact, a Pauli channel with the probability vector $(1-3p,p,p,p)$, see also Fig.~\ref{fig:triangle}.
Actually, the map $\dep_p$ defined in \eqref{eq:p} is a valid channel for any $p\in[0,1/3]$, although the interpretation as a partially depolarizing channel holds only for $p\in [0,1/4]$.
Further, we can start from any unitary Pauli channel instead of the identity channel, however, in that case the depolarization is with respect to a different basis.

For a channel $\dep_p$ we have 
\begin{equation}
{p}_+[j]=2\left(p+\sqrt{p(1-3p)}\right)
\end{equation}
for every $j=1,2,3$. 
From Thm. \ref{prop:maincomp} we conclude that the set of unbiased qubit observables compatible with $\dep_p$ corresponds to the shrunken Bloch ball with the radius $2\left(p+\sqrt{p(1-3p)}\right)$.
The identity channel $id=\dep_0$ shrinks the compatibility Bloch ball region to the central point, while the completely depolarizing channel $\dep=\dep_{1/4}$ keeps the Bloch ball invariant.
This is consistent with our earlier observations.

An interesting special case is the universal quantum NOT channel, which transforms any qubit input state to as close as possible to its orthogonal complement \cite{BuHiWe99}. This operation cannot be perfect for any input and for general input is described by a Pauli channel falling under the case presented in this subsection, where $p=1/3$; in this case $s\leq 2/3$.

\subsection{Phase damping channels and L\"uder's channels}

A \emph{phase damping channel} is a map that damps the off-diagonal elements of a density matrix in a specific basis. 
We fix the basis to be the eigenbasis of $\sigma_3$. 
The action of a phase damping channel $\Phi_p$ is then such that in the $\sigma_3$-eigenbasis the density matrices retain their diagonals, but the off-diagonal elements acquire a factor of $2p-1$.
Thus, we have a one-parameter class of Pauli channels $\Phi_p$ and the corresponding probability vector is $(p,0,0,1-p)$.
For $p\in[1/2,1]$ the action of the channel describes pure damping, while for $p\in[0,1/2]$ the damping is complemented with the inversion of the off-diagonal elements.
The extreme case $p=0$ corresponds to the inversion in the $xy$-plane without any damping.

For a phase damping channel $\Phi_p$ we have ${p}_+[1]={p}_+[2]=0$ and ${p}_+[3]=2\sqrt{p(1-p)}$. 
Therefore, using Theorem \ref{prop:maincomp}, we conclude that an observable $\A_{s,\vec{n}}$ is compatible with $\Phi_p$ if an only if $\vec n = (0,0,1)$ and $s\leq 2\sqrt{p(1-p)}$.

Specific cases are the identity channel ($p=1$) and the NOT channel ($p=0$) for which $s=0$. On the other end lies the case of $p=1/2$ which is the completely phase damping channel that zeroes all off-diagonal elements and conserves the diagonal which contains all the information about $z$-direction; this means that all $z$-measurements are compatible with this channel ($s\leq 1$) --- see Fig.~\ref{fig:ellipsoid} on the right for this example.

An interesting class of channels falling into this category are L\"uder's channels of $\Z_t$, given as
\begin{equation}
\luders{\Z_t}(\varrho)= \sqrt{\Z_t(+)} \varrho \sqrt{\Z_t(+)} + \sqrt{\Z_t(-)} \varrho \sqrt{\Z_t(-)} \, .
\end{equation}
This is a phase damping channel $\Phi_p$ in the $\sigma_3$-eigenbasis with $p=\half(\sqrt{1-t^2}+1)$. 
One direct consequence is hence that an observable $\X_{s}$ is compatible with the L\"uder's channel $\luders{\Z_t}$ if and only if $s=0$. 
This result stands in contrast to the compatibility at the level of observables, as $\X_s$ and $\Z_t$ are compatible if and only if $s^2+t^2\leq 1$ \cite{Busch86}. 
This means that if we want to implement a joint measurement of $\X_s$ and $\Z_t$ with $s,t\neq 0$, the joint measurement process cannot contain a L\"uders channel $\luders{\Z_t}$ or $\luders{\X_s}$.
We remark that it has been earlier shown that a joint measurement process cannot contain both $\luders{\Z_t}$ and $\luders{\X_s}$ \cite{HeJiReZi10}.
Our new result hence strengthens that observation.

\subsection{Measure-and-prepare channels}

Let us consider the class of measure-and-prepare channels related to observables $\Z_t$, $0\leq t\leq 1$.
For each $0\leq t \leq 1$, we define a map $\Theta_t$ as
\begin{equation}
\Theta_t(\varrho) = \tr{\varrho \Z_t(+)} \Z_1(+) + \tr{\varrho \Z_t(-)} \Z_1(-) \, .
\end{equation}
This is a measure-and-prepare channel, which can be implemented by first measuring the observable $\Z_t$ and then preparing either pure state $\Z_1(+)$ or $\Z_1(-)$, depending on the outcome of the measurement.
It is easy to see that $\Theta_t$ is unital for any $t$. 
Also, this channel is a composition of a partially depolarizing channel and a completely phase damping channel in $z$-direction. 
Overall, $\Theta_t$ is a Pauli channel defined by the probability vector
\begin{equation}
\vec p=\frac{1}{4}(1+t,1-t,1-t,1+t) \, .
\end{equation}
See also Fig.~\ref{fig:triangle}.

When considering the compatibility of $\Theta_t$ with an observable $\A_{s,\vec{n}}$, Theorem \ref{prop:maincomp} gives
\begin{equation}
\frac{n_1^2+n_2^2}{1-t^2}+n_3^2\leq \frac{1}{s^2} \, .
\end{equation}
After some manipulation we get
\begin{equation}
s^2+t^2-s^2t^2\cos^2\vartheta \leq 1 \, , 
\end{equation}
where $\vartheta$ is the angle between the Bloch vector $\vec{n}$ and the $z$-axis. 
This condition is equivalent to the compatibility between the observables given in \cite{Busch86}. 
Specifically for observables $\X_s$ and $\Z_t$ the previous condition on compatibility gives
\begin{equation}
s^2+t^2\leq 1 \, . 
\end{equation}
This is what one would have expected due to the physical nature of $\Theta_t$.

\section{Summary}
We addressed the question of compatibility of unbiased qubit channels and observables. 
Although our analysis was made explictly for Pauli channels, i.e.~random mixtures of Pauli unitaries, the conclusions hold for a more general case as well --- without lost of generality \cite{RuSzWe02} any qubit unital channel $\Phi$ can be expressed as a convex combination of (at most) four orthogonal unitary channels induced by unitary operators $U_j=U\sigma_j V^*$, where $U,V$ are suitable unitary operators. It follows that $\tr{U_j^* U_k}=0$ for $j\neq k$ and $\Phi(\varrho)=U\Psi_{\vec{p}}[V^*\varrho V]U^*$.

We have derived a compatibility formula (Eq.~\eqref{eq:tuomas}) for the case of unbiased qubit channels and observables. We have shown (Theorem \ref{prop:maincomp}) that for a given unital qubit channel the set of compatible unbiased observables form an ellipsoid (see Fig.~\ref{fig:ellipsoid}) naturally embedded inside the Bloch sphere representing the set of binary unbiased observables. 
Let us stress that concerning the observable compatibility by Proposition \ref{prop:unitary} the rotation induced by $U$ is irrelevant. Therefore, the ellipsoid for $\Phi$ is connected to the ellipsoid for the Pauli channels $\Psi_{\vec{p}}$ by the unitary rotation $V$. For a given qubit unital channel the sharpest (least noisy) compatible observable (quantified by parameter $s$) is oriented along the direction for which the value of $p_+[j]$ is maximal. 

Alternatively, we can depict also a region of Pauli channels compatible with an observable, however, this view is not so enlightening (see Fig.~\ref{fig:uglyregion}).

\section*{ACKNOWLEDGEMENTS}

T.H. acknowledges financial support from the Horizon 2020 EU collaborative
projects QuProCS (Grant Agreement No. 641277) and the Academy of Finland (Project no. 287750).
D.R. was financed by SASPRO Program No. 0055/01/01 project QWIN cofunded by the European Union and Slovak Academy of Sciences. We also acknowledge support from VEGA 2/0151/15 project QWIN and APVV-14-0878 project QETWORK. 
We are also grateful to David Reeb for many fruitful scientific discussions and for helping us to learn SDP before his scientific retirement.

\newpage

\section*{APPENDIX: THE DUAL SDP PROBLEM}

Let us recall the primal problem
\begin{align}
s_{\textrm{P}}:=&\max_{A'} \tr{\A^{\prime}(\vec{n}\cdot\vec{\Sigma})}\\
&
\begin{aligned}\notag
\text{subject to}\qquad & -\A^{\prime} \leq 0,\\
& \A^{\prime}-\id \leq 0,\\
& \tr{\A^{\prime}(\vec{n_1}\cdot\vec{\Sigma})} = 0,\\
& \tr{\A^{\prime}(\vec{n_2}\cdot\vec{\Sigma})} = 0,
\end{aligned}.
\end{align}
The Lagrangian is then
\begin{align}
&L(\A^{\prime},\lambda_1,\lambda_2,\mu,\nu)=\nonumber\\
&\tr{\A^{\prime}(-(\vec{n}+\mu \vec{n}_1+\nu \vec{n}_2)\cdot\vec{\Sigma}-\lambda_1+\lambda_2)}-\tr{\lambda_2},
\end{align}
where $\lambda_i\geq 0$.
The Lagrange dual function $g(\lambda_1,\lambda_2,\mu,\nu)$ is then infimum over $\A^{\prime}$ of the Lagrangian. Since it is linear in $\A^{\prime}$ we get that
\begin{align}
 g(\lambda_1,\lambda_2,\mu,\nu)= \left\{
\begin{array}{ll}
      -\tr{\lambda_2} & -(\vec{n}+\mu \vec{n}_1+\nu \vec{n}_2)\cdot\vec{\Sigma}\\
     &\qquad\quad -\lambda_1+\lambda_2=0 \\
      -\infty & \textrm{otherwise}. \\
\end{array} 
\right. 
\end{align}
Thus the function $g$ is nontrivial only when 
\begin{equation}
\lambda_2-(\vec{n}+\mu \vec{n}_1+\nu \vec{n}_2)\cdot\vec{\Sigma}=\lambda_1\geq 0 \, .
\end{equation}
 Let $\vec{n}+\mu \vec{n}_1+\nu \vec{n}_2=\vec{m}$. Note that $\vec{m}\cdot\vec{n}=1$. We can include this condition into the constraints of the dual problem, which is now stated as
\begin{align}
s_{\textrm{D}}:=&\max_{\lambda_1,\lambda_2,\vec m} (-\tr{\lambda_2})\\
&
\begin{aligned}\notag
\text{subject to}\qquad & \lambda_2\geq 0,\\
& \lambda_2 - \vec m\cdot\vec \Sigma=\lambda_1,\\
& \lambda_1\geq 0\\
& \vec m\cdot\vec n=1,
\end{aligned}
\end{align}
or in its simplified form in \eqref{eq:dual}.

\vspace{0.5cm}

\end{document}